\documentclass[prx,a4paper,aps,twocolumn,superscriptaddress,longbibliography]{revtex4-1}
\usepackage{amsmath, empheq,braket,amsthm,amssymb,graphics,amsfonts,array,color,subfigure}
\definecolor{myurlcolor}{rgb}{0,0,0.7}
\definecolor{myurlcolor1}{rgb}{0,0.7,0.1}
\definecolor{myrefcolor}{rgb}{0,0,0.7}
\usepackage[unicode=true,pdfusetitle, bookmarks=true,bookmarksnumbered=false,bookmarksopen=false, breaklinks=false,pdfborder={0 0 0},backref=false,colorlinks=true, linkcolor=myrefcolor,citecolor=myurlcolor1,urlcolor=myurlcolor]{hyperref}

\theoremstyle{plain}
\newtheorem{thm}{\protect\theoremname}
\newtheorem{prop}[thm]{Proposition}
\newtheorem{rem}[thm]{Remark}

\providecommand{\theoremname}{Theorem}
\newcommand*{\myproofname}{Proof}

\newcommand{\act}{A_l}
\newcommand{\fs}{\mathcal{I}_{\mathrm{f}}}

\newcommand{\fo}{\Lambda_{\mathrm{f}}}
\newcommand{\IW}{\mathcal{I}_{\mathrm{W}}}

\DeclarePairedDelimiterX{\infdivx}[2]{(}{)}{%
  #1\;\delimsize\|\;#2%
}
\newcommand{\infdiv}{S\infdivx}

\begin{document}

\title{Quantum thermodynamics in a multipartite setting: A resource theory of local Gaussian work extraction for multimode bosonic systems}
\author{Uttam Singh}
\email{utsingh@ulb.ac.be}
\affiliation{Centre  for Quantum Information and Communication, {\'E}cole polytechnique de Bruxelles,\\
CP 165, Universit{\'e} libre de Bruxelles, 1050 Bruxelles, Belgium}
\author{Michael G. Jabbour}
\email{mgjabbour@maths.cam.ac.uk}
\affiliation{Centre  for Quantum Information and Communication, {\'E}cole polytechnique de Bruxelles,\\
CP 165, Universit{\'e} libre de Bruxelles, 1050 Bruxelles, Belgium}
\affiliation{DAMTP, Centre for Mathematical Sciences, University of Cambridge, Cambridge CB3 0WA, UK}
\author{Zacharie Van Herstraeten}
\email{zvherstr@ulb.ac.be}
\author{Nicolas J. Cerf}
\email{ncerf@ulb.ac.be}
\affiliation{Centre  for Quantum Information and Communication, {\'E}cole polytechnique de Bruxelles,\\
CP 165, Universit{\'e} libre de Bruxelles, 1050 Bruxelles, Belgium}

\begin{abstract}
Quantum thermodynamics can be cast as a resource theory by considering free access to a heat bath, thereby viewing the Gibbs state at a fixed temperature as a free state and hence any other state as a resource. Here, we consider a multipartite scenario where several parties attempt at extracting work locally, each having access to a local heat bath (possibly with a different temperature), assisted with an energy-preserving global unitary. As a specific model, we analyze a collection of harmonic oscillators or a multimode bosonic system. Focusing on the Gaussian paradigm, we construct a reasonable resource theory of local activity for a multimode bosonic system, where we identify as free any state that is obtained from a product of thermal states (possibly at different temperatures) acted upon by any linear-optics (passive Gaussian) transformation. The associated free operations are then all linear-optics transformations supplemented with tensoring and partial tracing. We show that the local Gaussian extractable work (if each party applies a Gaussian unitary, assisted with linear optics) is zero if and only if the covariance matrix of the system is that of a free state. Further, we develop a resource theory of local Gaussian extractable work, defined as the difference between the trace and symplectic trace of the covariance matrix of the system. We prove that it is a resource monotone that cannot increase under free operations. We also provide examples illustrating the distillation of local activity and local Gaussian extractable work.
\end{abstract}

\maketitle

\section{Introduction}
Thermodynamics is a macroscopic theory applicable in the limit where the number of particles and volume tend to infinity \cite{Callen1985}.  However, with our increasing ability to control or manipulate small systems and the realization of molecular motors \cite{Rousselet1994, Howard1997, Hanggi2009} and nanomachines \cite{Scovil1959, Geusic1967, Alicki1979, Scully2002, Faucheux1995}, the scope of applicability of thermodynamics is starting to stretch beyond the macroscopic region. One of the main goals of the thermodynamics of small systems--quantum thermodynamics--is the extraction of work by means of cyclic Hamiltonian transformations of a quantum state. Evidently, it is of great importance to know which states do not allow for any work extraction under Hamiltonian transformations. Such states are known as {\it passive states} \cite{Pusz1978, Lenard78}. For a quantum system in a state $\rho$ with a given Hamiltonian $\hat{H}$, the maximum amount of work that can be extracted using any unitary $U$ is defined as 
\begin{align*}
W(\rho,\hat{H}):=\max_{U}\mathrm{Tr}[\hat{H}\left(\rho - U\rho U^\dagger\right)].
\end{align*}
Thus, a passive state $\rho_{\mathrm{p}}$ is such that  $W(\rho_{\mathrm{p}},\hat{H})=0$. It is also known that, given a passive state $\rho_{\mathrm{p}}$, a tensor power of it may or may not be passive; \textit{i.e.}, $W(\rho_{\mathrm{p}}^{\otimes n},\hat{H}_{\mathrm{tot}})$ may or may not be zero for some integer $n$, where $\hat{H}_{\mathrm{tot}}$ is the total Hamiltonian. A passive state $\rho_{\mathrm{p}}$ that remains passive for all its tensor powers $\rho_{\mathrm{p}}^{\otimes n}$, $\forall n$, is known as completely passive. A central result in quantum thermodynamics is that the only completely passive states are the thermal Gibbs states $\rho \propto \exp(-\beta \hat H)$ \cite{Lenard78}.

A resource theory of thermodynamics can be developed to systematically describe work extraction from a quantum system and, in general, the allowed state transformations are those where the system interacts via an energy-preserving unitary together with an ancilla chosen to be in a thermal Gibbs state (with an arbitrary Hamiltonian) at some fixed temperature \cite{Michal2013, Brandao2013, Brandao2015b, Goold2016}. In this resource-theoretic treatment of quantum thermodynamics, the thermal Gibbs state of the system at the same temperature as that of the ancilla is the only free state \cite{Michal2013, Brandao2013, Brandao2015b}. Although considering arbitrary Hamiltonians and arbitrary energy-preserving unitaries is satisfying in the context of establishing a general framework for quantum thermodynamics, it may also be interesting to focus on states and unitaries of higher practical relevance. For bosonic systems, for example, restricting to Gaussian states and Gaussian operations has proven to be very fruitful, particularly in the field of quantum information theory with continuous variables \cite{Weedbrook2012, Olivares2012, Adesso2014}. Similarly, exploring quantum thermodynamics with Gaussian bosonic systems is a promising avenue \cite{Brown2016}, which we investigate here.

\begin{figure}
\centering
\includegraphics[width=65 mm]{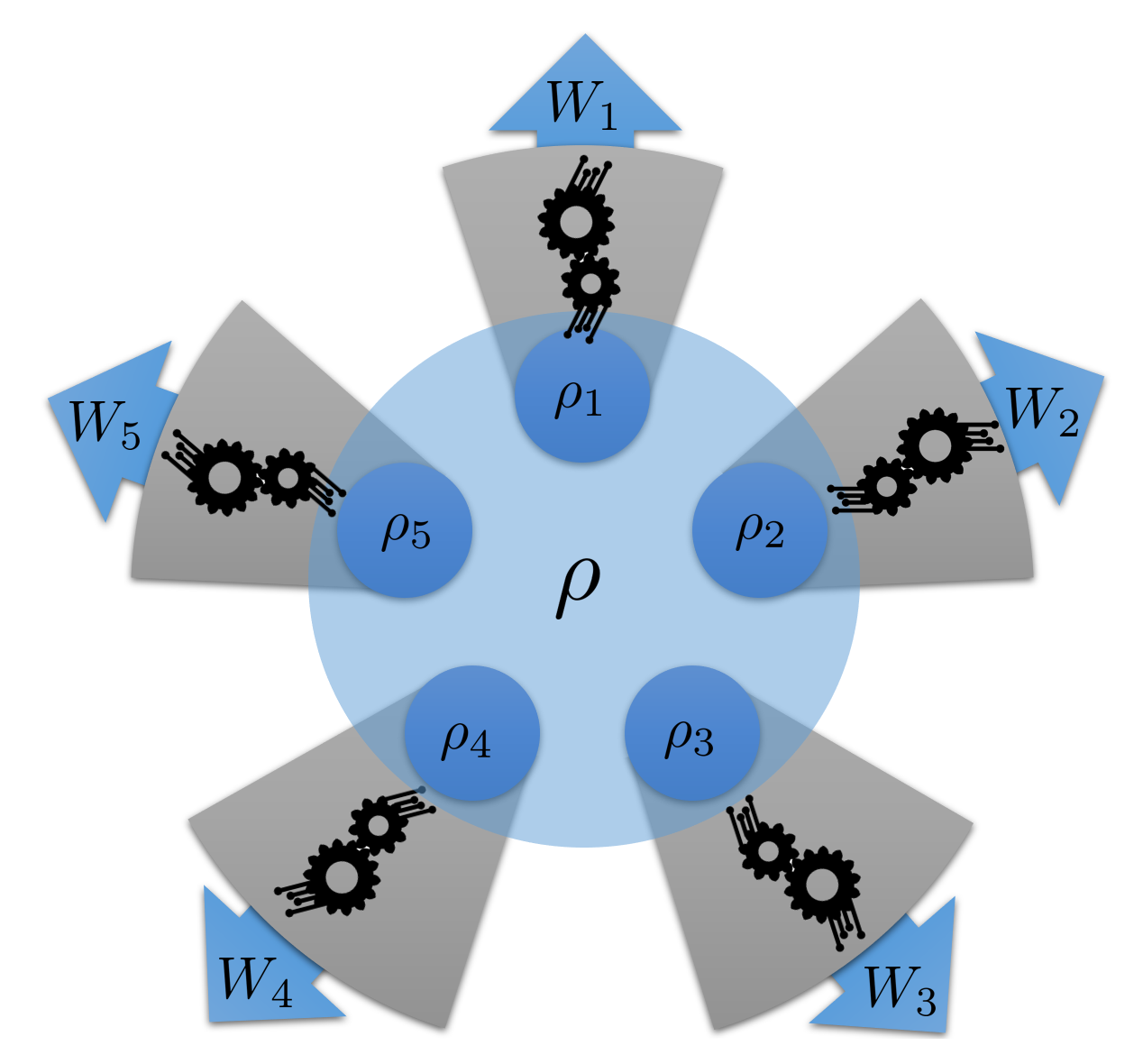}
\caption{Multipartite quantum thermodynamical scenario in which each party extracts work locally and this process is being assisted with a global energy-preserving unitary (allowing the parties to exchange energy among them but not allowing global work extraction). Here, we consider the model where the quantum system held by each party is a harmonic oscillator (or a bosonic mode) and local work extraction is restricted to Gaussian unitaries (especially squeezing). It is assisted with a global energy-preserving (passive) Gaussian unitary, which corresponds to any linear-optics circuit.}
\label{fig:scenario}
\end{figure}


In this paper, we explore a {\it multipartite} quantum thermodynamical scenario as illustrated in Fig. \ref{fig:scenario}, where each party can extract work locally by applying a local unitary and this process is being assisted with a global energy-preserving unitary (hence, allowing no global work extraction as such). This is not a trivial extension of work extraction for a single party because there exist situations where an energy-preserving coupling allows the parties to extract work locally even though their local (reduced) states are initially passive. Given the definition of passive states, a natural choice may be to consider them (instead of Gibbs states) as free states in a resource theory for extractable work. However, considering passive states as free states defies a plausible criterion for any reasonable resource theory, namely that if a state $\rho$ is free, then $\rho^{\otimes n}$ should also be free for any integer $n$ \cite{Brandao2015}. For this reason, we rather take Gibbs states as building blocks of our free states for each party, which allows us to develop a multipartite resource theory for local extractable work within this restriction.
 

In particular, we examine a multimode bosonic system as a specific model, where the Hamiltonian is that of $N$ harmonic oscillators. Hence, the Gibbs states reduce to Gaussian thermal states and energy-preserving unitaries become passive Gaussian unitaries (\textit{i.e.}, all linear-optics transformations). In order to develop a multipartite resource theory, we consider as {\it free states} the products of thermal states (possibly at different temperatures) acted upon by linear-optics transformations (see Fig. \ref{fig:free-state}a). We first discuss the properties of this set of free states, denoted as $\fs$, and build the set of {\it free operations} $\fo$ that is consistent with $\fs$. Specifically, a free operation cannot create local activity (i.e., a resource state) when acting upon any free state. We note that $\fs$ is not convex, owing to the fact that the set of Gaussian states is not convex. This issue might be solved by using the convex hull of $\fs$ (see, e.g., \cite{Albarelli2018, Takagi2018}), but we choose not to follow this procedure here as there is a physically motivated way to define a convex set $\IW$ that contains $\fs$ (see Fig. \ref{fig:set-of-free-states}). Indeed, it appears that the covariance matrices of our free states, which we call free covariance matrices, form a convex set. Furthermore, we prove that the states admitting a free covariance matrix coincide with the states from which no work can be extracted by local Gaussian unitaries assisted with linear optics. This set of states with no extractable work, noted $\IW$, is therefore convex. Clearly, $\fs$ is contained in $\IW$ as it corresponds to the subset of Gaussian states within $\IW$, that is, the Gaussian states from which no local Gaussian work extraction assisted with linear optics is possible (the non-Gaussian states with no extractable work belong then to $\IW \setminus \fs$). It must be noticed that there exist states in $\IW$ that do not belong to the convex hull of $\fs$, as sketched in Fig. \ref{fig:set-of-free-states}. These are non-Gaussian states with no extractable work that cannot be written as convex mixtures of (Gaussian) free states.
 
\begin{figure}
\centering
\subfigure[~Free states.]{
\includegraphics[width=0.55\columnwidth]{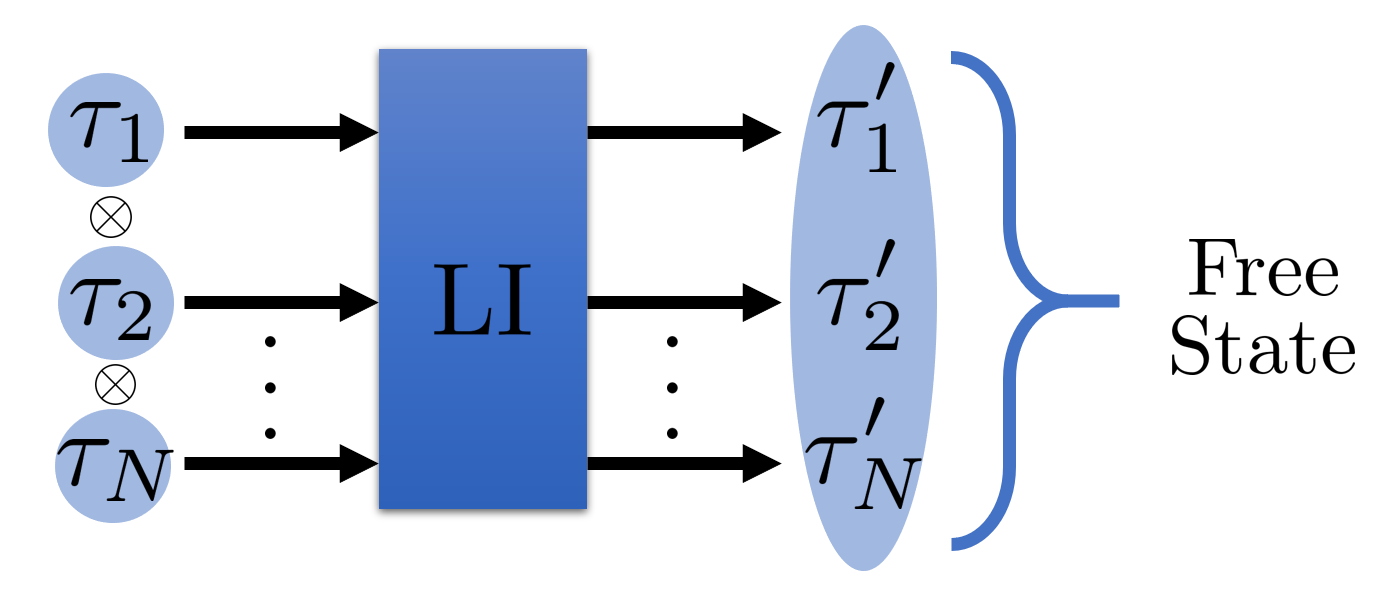}}
\subfigure[~Tensoring of free states.]{
\includegraphics[width=0.41\columnwidth]{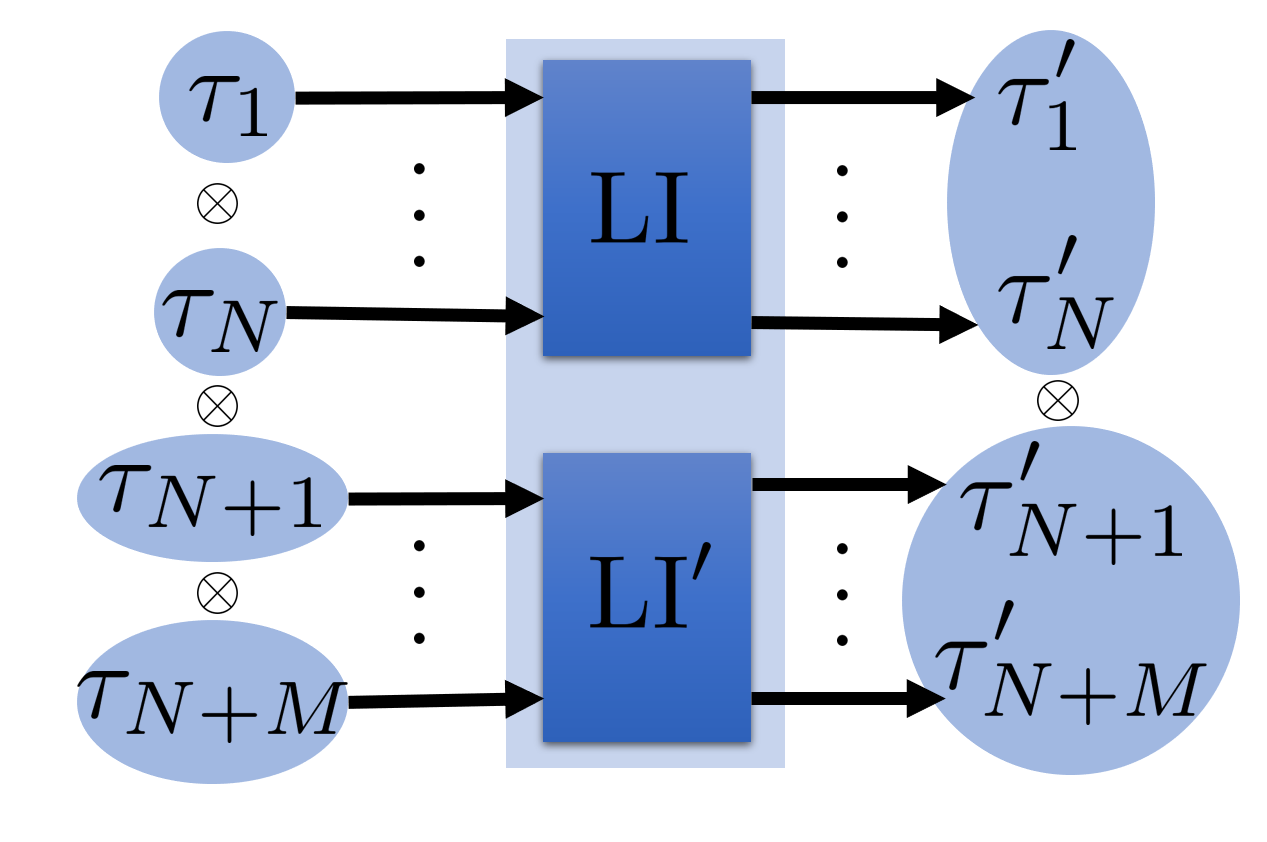}}
\caption{(a) Definition of an $N$-mode free state $\rho^f_{1\cdots N}$. Here, $\tau_1\otimes\cdots \otimes\tau_N$ is a tensor product of $N$ thermal states (possibly with different temperatures) and LI represents a linear interferometer. The single-mode reduced states of $\rho^f_{1\cdots N}$ are thermal states $\tau_1',\cdots, \tau_N'$. (b) From the definition (a), it is clear that the tensor product of free states is itself a free state. }
\label{fig:free-state}
\end{figure}

In this work, we start by developing a resource theory of {\it local activity} in terms of quantum states (that is, building on the set of free states $\fs$; hence not being in $\fs$ is viewed as a resource). We dub the resource states, \textit{i.e.}, the states $\rho \notin \fs$, as {\it locally active} states in the sense that they contrast with passive states. We develop resource monotones for local activity based on contractive distance measures, with a particular emphasis on relative entropy. We find that the relative entropy of local activity of a state $\rho$, noted $A_l(\rho)$,  is additive for product states; however, it is neither sub- nor superadditive for arbitrary quantum states (although we can express a relaxed form of subadditivity). Next, we explicitly calculate the relative entropy of local activity for arbitrary two-mode Gaussian states. Moreover, through explicit examples, we show that it is possible to obtain a more resourceful state starting from two copies of a less resourceful state. This shows that the distillation of local activity is, in principle, possible.

In the second part of this work, we develop a resource theory in terms of covariance matrices (that is, building on the set of free covariance matrices instead of free states; hence not being in $\IW$ is viewed here as a resource). This leads to a resource theory of {\it local Gaussian work extraction} assisted with linear optics. We elaborate on the properties of local Gaussian extractable work in this multipartite setting, noted $W_l(\rho)$, and discuss its possible distillation. 
Interestingly, it can be expressed as the difference between the trace and symplectic trace of the covariance matrix associated with the state, allowing us to access its properties by exploiting the symplectic formalism of quantum optics.





\begin{figure}
\centering
\includegraphics[width=0.65\columnwidth]{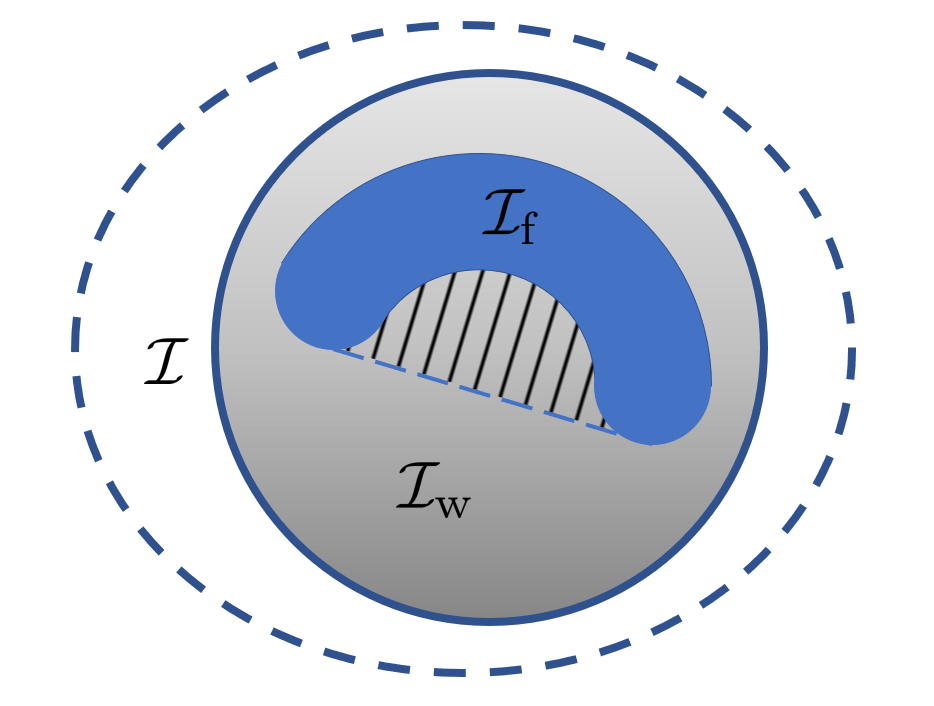}
\caption{Within the set of all quantum states $\cal I$, we define the set of states $\IW$ from which no local Gaussian work extraction (assisted with linear optics) is possible, namely states $\rho$ such that  $W_l(\rho)=0$.  The subset of Gaussian states within $\IW$ is our set of free states $\fs$, namely states $\rho$ whose local activity $A_l(\rho)=0$. Note that $\fs$ is not convex, owing to the fact that Gaussian states do not form a convex set, but it is strictly included in the convex set $\IW$. Thus, there exist non-Gaussian locally active states ($A_l\ne 0$) from which no local Gaussian extraction of work is possible ($W_l=0$).  }
\label{fig:set-of-free-states}
\end{figure}

The rest of the paper is organized as follows. We first set the notations and some preliminaries in  Sec. \ref{sec:prelims}. In Sec. \ref{sec:basic},  we introduce the set of free states and free operations for our resource theory of local activity and discuss their properties. In Sec. \ref{sec:monotones}, we define local activity monotones and then provide some explicit calculations in Sec. \ref{sec:mono-calc}. In particular, we derive a closed-form formula for the relative entropy of local activity for arbitrary two-mode Gaussian states.  In Sec. \ref{sec:extra-work}, we introduce local Gaussian extractable work viewed as a resource, and discuss its various properties. In Sec. \ref{sec:distillation}, we discuss the possibility of distillation of various resources. We conclude in Sec. \ref{sec:conclusion}, with a discussion on the implications of our findings. Finally, in the appendices,  we provide details on some of our calculations.


\section{Preliminaries and notations}
\label{sec:prelims}

\textbf{Gaussian states:} Let us consider a system of $N$ bosonic
modes with quadrature operators ${\bf\hat{x}}=(\hat{q}_{1},\hat{p}_1\dots,\hat{q}_{N}%
,\hat{p}_{N})^{T}$ which satisfy the canonical commutation relations \cite{Weedbrook2012}
\begin{align*}
\lbrack\hat{x}_i,\hat{x}_j] = \mathfrak{i} \, \Omega_{ij}~~~(i,j=1,\cdots,2N),
\end{align*}
where
\begin{align}
\mathbf{\Omega}=\bigoplus\limits_{k=1}^{N} \pmb{\omega},~~~~~~\pmb{\omega}=
\begin{pmatrix}
0 & 1\\
-1 & 0
\end{pmatrix},
\end{align}
and we have set $\hbar=1$.
The corresponding $N$ pairs of annihilation and creation operators are defined as $\hat{a}_i=\frac{1}{\sqrt{2}}(\hat{q}_i+\mathfrak{i} \hat{p}_i)$ and $\hat{a}_i^\dagger=\frac{1}{\sqrt{2}}(\hat{q}_i-\mathfrak{i} \hat{p}_i)$. The Hamiltonian corresponding to mode $i$ is given by $\hat{H}_i=\hat{a}_i^\dagger \hat{a}_i +1/2$, where we have considered all angular frequencies to be the same and equal to one. Given an $N$-mode quantum state $\rho$, the first-order moments constitute the displacement vector, defined as
\begin{equation}
{\bf\bar{x}}:=\left\langle {\bf\hat{x}}\right\rangle =\mathrm{Tr}({\bf\hat{x}}\hat{\rho}).
\end{equation}
The second-order moments make the covariance matrices (CMs), defined as
\begin{equation}
\Gamma_{ij}:=\frac{1}{2}\left\langle \left\{  \hat{x}_{i}-\langle\hat{x}_{i}\rangle,\hat{x}_{j}-\langle\hat{x}_{j}\rangle\right\}  \right\rangle , \label{CM_definition}%
\end{equation}
where $\{ \bullet , \bullet \}$ represents the anticommutator. The matrix $\pmb{\Gamma}$ is a positive-definite matrix. In particular, any positive-definite matrix which satisfies the uncertainty relation qualifies as a valid CM \cite{Simon1994}. 

A Gaussian quantum state $\rho=\rho(\mathbf{\bar{x}},\pmb{\Gamma})$ has a Gaussian Wigner representation. As a consequence, it is described fully in terms of its first two statistical moments, namely, the displacement vector and the CM \cite{Holevo2011}. The vacuum state $\ket{0}$ is a Gaussian state with zero displacement and CM $\pmb{\Gamma}=\frac{1}{2}\mathbb{I}_2$, where $\mathbb{I}_2$ denotes a $2\times 2$ identity matrix. Similarly, the thermal state $\rho_{\mathrm{th}}$ is a Gaussian state with zero displacement and CM $\pmb{\Gamma}=(\bar{n}+\frac{1}{2})\mathbb{I}_2$, where $\bar{n}= \mathrm{Tr}[\rho_{\mathrm{th}}\hat{a}^\dagger \hat{a}]$ \cite{Weedbrook2012}.

The total energy of a system of $N$ bosonic
modes in an arbitrary state $\rho$ is given by $E=\sum_i \mathrm{Tr}[\rho \hat{H}_i]$. This can be reexpressed as 
\begin{align}
\label{eq:ene-exp}
E=\frac{1}{2} \left(\mathrm{Tr}[\pmb\Gamma] +  |\bf\bar{x}|^2\right),
\end{align}
where $\pmb\Gamma$ and $\bf\bar{x}$ are the CM and displacement vector of state $\rho$, respectively. Note that the expression of the energy holds for any state (Gaussian or not).

It is worthwhile to notice that an arbitrary $N$-mode Gaussian state $\rho(\mathbf{\bar{x}},\pmb{\Gamma})$ can be written as \cite{Banchi2015}
\begin{equation}
\rho(\mathbf{\bar{x}},\pmb{\Gamma})=\frac{\exp\left[  -\frac{1}{2}({\bf\hat{x}}-{\bf\bar{x}})^{T}{\bf G}({\bf\hat{x}}-{\bf\bar{x}})\right]  }%
{\det\left( \pmb \Gamma+\mathfrak{i}\pmb\Omega/2\right)  ^{1/2}},
\end{equation}
where the matrix ${\bf G}$ can be defined in terms of the covariance matrix $\pmb\Gamma$ as
\begin{equation}
\label{def:G}
{\bf G}=2\mathfrak{i}\pmb\Omega\,\coth^{-1}(2\pmb\Gamma \mathfrak{ i}\pmb\Omega).
\end{equation}
From the Williamson theorem \cite{Williamson1936, Weedbrook2012}, any CM $\pmb\Gamma$ can be brought into the form ${\bf D}=\bigoplus\nolimits_{k=1}^{N}\nu_{k}\mathbb{I}_{2}$ through expression $\pmb\Gamma=\mathcal{S}{\bf D}\mathcal{S}^T$, where $\mathcal{S}$ is a symplectic matrix (it satisfies $\mathcal{S}\pmb\Omega\mathcal{S}^T=\pmb\Omega$) and the variables $\nu_{k}$ are called symplectic eigenvalues (they satisfy the uncertainty principle $\nu_{k}\geq1/2$, $\forall k$). Using this, one has ${\bf G}=-\pmb\Omega \mathcal{S} \mathcal{G}({\bf D}) \mathcal{S}^T \pmb\Omega$, where $\mathcal{G}(x)= 2\coth^{-1}(2x)$ \cite{Banchi2015}.

\bigskip
\textbf{Gaussian unitary operations:} A Gaussian unitary transformation is a unitary transformation that preserves the Gaussian character of a quantum state \cite{Weedbrook2012}. In terms of quadrature operators, a Gaussian unitary transformation is an affine map
\begin{align}
(\mathcal{S}, {\bf d}):{\bf \hat{x}}\rightarrow \mathcal{S}{\bf\hat{x}}+{\bf d},
\end{align}
where $\mathcal{S}$ is a $2N\times 2N$ real symplectic matrix and ${\bf d}$ is a $2N\times 1$ real vector. Under Gaussian unitary transformations, ${\bf \bar{x}}\rightarrow\mathcal{S}{\bf \bar{x}}+{\bf d}$ and $\pmb\Gamma\rightarrow\mathcal{S}\pmb\Gamma\mathcal{S}^T$. The Gaussian unitary is called passive if it is energy conserving (or photon-number conserving). In the rest of this work, we denote passive Gaussian unitaries as $U^{\mathrm{PG}}$. Such a unitary implies an orthogonal symplectic transformation $\mathcal{S}$ on the quadrature operators. Physically, passive Gaussian unitaries correspond to all linear-optics circuits, that is, any multiport interferometer made of beam splitters and phase shifters.

\bigskip
\textbf{von Neumann entropy of a Gaussian state:}
The von Neumann entropy of a Gaussian state $\rho$ can be written as \cite{Holevo1999, Weedbrook2012}
\begin{align}
S(\rho)=\sum_{k=1}^{N}g(\nu_{k}),
\end{align}
where the $\nu_{k}$ are the symplectic eigenvalues of $\rho$, while
\begin{align}
g(y)=\left(y+\frac{1}{2}\right)\ln\left(y+\frac{1}{2}\right) - \left(y-\frac{1}{2}\right)\ln\left(y-\frac{1}{2}\right).
\end{align}
Here, the logarithm is considered in base $e$.
For a thermal state $\rho_{th}$, $\nu_{k}=\bar{n}_{k}+1/2$, where $\bar{n}_{k}= \mathrm{Tr}[\rho\hat{a}_k^\dagger \hat{a}_k]$ is the mean number of photons in the $k$th mode, so that its entropy is given by
\begin{align}
S(\rho_{th})=\sum_{k=1}^{N}\left[(\bar{n}_{k}+1)\ln(\bar{n}_{k}+1) - \bar{n}_{k}\ln\bar{n}_{k}\right].
\end{align}

\bigskip
\textbf{Relative entropy between Gaussian states:} The relative entropy between two arbitrary Gaussian states
$\rho({\bf \bar{x}}_{1},\pmb\Gamma_{1})$ and $\sigma({\bf \bar{x}}_{2},\pmb\Gamma_{2})$ is given by \cite{Pirandola2017}
\begin{align}
&S(\rho({\bf\bar{x}}_{1},\pmb\Gamma_{1})||\sigma({\bf\bar{x}}_{2},\pmb\Gamma_{2}))\nonumber\\
&=S(\rho)+ \frac{1}{2}\left[\ln\det\left(  \pmb\Gamma_{2}+\frac{i\pmb\Omega}%
{2}\right)  +\mathrm{Tr}(\pmb\Gamma_{1}{\bf G}_{2})+\pmb\delta^{T}{\bf G}_{2}\pmb\delta\right],
\label{functional}%
\end{align}
where $\pmb\delta:={\bf \bar{x}}_{1}-{\bf\bar{x}}_{2}$ and ${\bf G}_2$ is defined through Eq. (\ref{def:G}). See Refs. \cite{Scheel2001, Chen2005} for other formulations of the relative entropy between Gaussian states.

\section{Basic framework: Free states and free operations}
\label{sec:basic}
A general resource theory comprises two basic elements: the set of free states and the set of free operations. Based on these two elements, the resource states can be identified and the amount of the resource is then quantified with the help of resource monotones which satisfy certain {\it bonafide} criteria. For more details on the structure of resource theories, see Refs. \cite{Brandao2015, Lami2018}. For examples of well-studied resource theories, see Refs. \cite{Bartlett2007, Genoni2008, Horodecki2009, Michal2013, Brandao2013, Marvian2013, Marian2013, Marvian2014, Baumgratz2014, Streltsov2015, Brandao2015b, Streltsov2017, Gallego2015, Albarelli2018, Takagi2018, Zhuang2018, Chitambar2019, Contreras2019, Taddei2019}.

In the following, we introduce the free states and free operations suitable for our purposes of describing local Gaussian work extraction scenarios and discuss the implications of these two basic elements.

\bigskip
\noindent
{\bf(1) Free states:} A state is free if it is a passive Gaussian unitary equivalent of a product of thermal states, as illustrated in Fig.~\ref{fig:free-state}a. In other words, it is a product of thermal states acted upon by any passive Gaussian unitary. Let us denote the set of free states as $\fs = \{U^{\mathrm{PG}}(\tau_1\otimes\cdots\otimes \tau_N)U^{\mathrm{PG}\dagger}\}$, where $U^{\mathrm{PG}}$ is a passive Gaussian unitary transformation and $\{\tau_i\}$ are thermal states corresponding to different modes (possibly with different temperatures). As already mentioned, any passive Gaussian unitary can be built with linear optics (it is a concatenation of beam splitters and phase shifters). In the case of two modes, $U^{\mathrm{PG}}$ is a combination of just a single beam splitter and three phase shifters. In particular, any $N$-mode passive Gaussian unitary can be written as a concatenation of $N(N-1)/2$ beam splitters and $N(N+1)/2$ phase shifters \cite{Simon1994}.

\begin{rem}
The free states are Gaussian states such that the reduced state of each mode is a thermal state (the converse is not true). This directly follows from the structure of their covariance matrix [see Eq. \eqref{eq:free-cov}].
\end{rem}
\begin{rem}
The free states 
are separable. This follows from Refs. \cite{Kim2002, Xiang2002}, which state that the output state of a beam splitter is always a separable state if the input is classical; \textit{i.e.}, it has a positive $P$ function \cite{Sudarshan1963, Glauber1963, Cahill1969}. However, the converse is not true and all Gaussian separable states do not belong to the set of free states. In order to see that, consider a coherent state as an example of a state with a positive $P$ function. If a coherent state is fed in one input of a beam splitter and a vacuum in its other input, the output gives rise to a Gaussian separable state, while it is not free according to our definition.
\end{rem}
\begin{rem}
The set of free states $\fs$ is not convex. This follows from the fact that the set of Gaussian states is not convex; \textit{i.e.}, if we consider the convex combination of free states, then the resulting state will generally not be free as it might be non-Gaussian.
\end{rem}

Since the free states are Gaussian states, we can describe them using their displacement vector and covariance matrix. Indeed, an $N$-mode free state $U^{\mathrm{PG}}(\tau_1\otimes\cdots\otimes\tau_N)U^{\mathrm{PG}\dagger}$ can be seen as a Gaussian state with displacement zero and CM
 \begin{align}
 \label{eq:free-cov-unsimp}
 \pmb\Gamma(N)=\mathcal{O}\left(\oplus_{i=1}^{N} \nu_i \, \mathbb{I}_2\right) \mathcal{O}^T,
 \end{align}
 where $\nu_i=\bar{n}_i +1/2$, with $\bar{n}_i$ being the average photon number in the $i$th thermal state $\tau_i$ and $\mathcal{O}$ being an orthogonal symplectic transformation that corresponds to $U^{\mathrm{PG}}$. Furthermore, we show that the CM of an $N$-mode free state can be written in a simple form as 
\begin{align}
 \label{eq:free-cov}
 \pmb\Gamma(N)
 =\begin{pmatrix}
 a_{11}\mathbb{I}_2 &R_{12} &\cdots & R_{1N} \\
 R_{12}^T &a_{22} \mathbb{I}_2&\cdots & R_{2N} \\
 \vdots &\vdots &\ddots &\vdots\\
 R_{1N}^T  & R_{2N}^T&\cdots & a_{NN} \mathbb{I}_2
 \end{pmatrix},
 \end{align}
 where $R_{ij}$ are $2\times 2$ matrices such that $R_{ij}R_{ij}^T\propto \mathbb{I}_2$ and $R_{ij}\pmb{\omega} R_{ij}^T\propto \pmb{\omega}$. This can be proved using mathematical induction and the fact that any passive Gaussian unitary can be built with linear optics. The interested reader is referred to Appendix \ref{append:free-states} for a proof. Note also that the symplectic eigenvalues of $\pmb\Gamma(N)$ in Eq. \eqref{eq:free-cov} are the same as its eigenvalues.

\begin{rem}
The covariance matrices corresponding to free states form a convex set. That is, if $\pmb\Gamma_{free}^{(1)}$ and $\pmb\Gamma_{free}^{(2)}$ are two CMs corresponding to two free states (Eq. \eqref{eq:free-cov}), then for $0\leq p \leq 1$, $p \pmb\Gamma_{free}^{(1)} + (1-p) \pmb\Gamma_{free}^{(2)}$ is also of the form given by Eq. \eqref{eq:free-cov}. See Appendix \ref{append:non-convex} for a proof.  Note that this is not in contradiction with the fact that the set of free states is not convex. Indeed, the covariance matrix that we obtain after mixing two covariance matrices corresponding to Gaussian free states can very well describe a non-Gaussian state, which is not free according to our definition. However, in situations where we are only concerned with covariance matrices (for example, if we are interested in work extraction), we recover convexity of the set $\IW$ (see Fig. \ref{fig:set-of-free-states}) since the mixture of free covariance matrices is free.
\end{rem}

From the structure of the covariance matrices of our free states, Eq. \eqref{eq:free-cov}, it is obvious that locally, for each mode, we get the covariance matrix of a thermal state. Therefore, one understands that no work can be extracted locally from these free states. This hints at the fact that we are in the right direction if our goal is to develop a resource theory of local Gaussian work extraction. In fact, we will later prove that the free covariance matrices are actually the only covariance matrices that do not allow for any local Gaussian work extraction, even when assisted with linear optics.

As already noted, free covariance matrices may also characterize non-Gaussian states in state space. As a consequence, defining the resource of an arbitrary state in terms of its distance to the set of free states $\fs$ in state space does not necessarily quantify its usefulness for local Gaussian work extraction. Therefore, we choose the denomination {\it local activity} for the distance-based monotones in state space as defined in Sec. \ref{sec:monotones}, making an explicit distinction with the {\it local extractable work} later considered in Sec. \ref{sec:extra-work} based on the phase-space picture. This distinction is connected to the fact there exist states out of $\fs$ (hence, locally active) belonging to $\IW$ (hence, having no local Gaussian extractable work) as shown in Fig. \ref{fig:set-of-free-states}.
\\ \ \\

\noindent
 {\bf(2) Free operations:} 
 
 \bigskip
 \noindent
 {\it $(O_1)$ Passive (energy-conserving) Gaussian unitaries:--} These are by definition free operations.
 
 \bigskip
 \noindent
  {\it $(O_2)$ Tensoring of free states:--} Given an $N$-mode free state $U^{\mathrm{PG}}(\tau_1\otimes\cdots\otimes\tau_N)U^{\mathrm{PG}\dagger}$, if we tensor it with any other $M$-mode free state $V^{\mathrm{PG}}(\tau_1'\otimes\cdots\otimes\tau_M')V^{\mathrm{PG}\dagger}$, then $U^{\mathrm{PG}}\otimes V^{\mathrm{PG}}(\tau_1\otimes\cdots\otimes\tau_N)\otimes  (\tau_1'\otimes\cdots\otimes\tau_M') U^{\mathrm{PG}\dagger}\otimes V^{\mathrm{PG}\dagger}$ is again a free state as $U^{\mathrm{PG}}\otimes V^{\mathrm{PG}}$ is another passive Gaussian unitary. Similarly, since tensoring in state space means applying direct summation in phase space, it is easy to understand that, if $\pmb\Gamma_{free}^{(1)}$ and $\pmb\Gamma_{free}^{(2)}$ are two free covariance matrices, then $\pmb\Gamma_{free}^{(1)}\oplus\pmb\Gamma_{free}^{(2)}$ is also free. This is obvious from Eq. (\ref{eq:free-cov}).

 \bigskip
 \noindent
 {\it $(O_3)$ Partial tracing:--} That partial tracing is a free operation can be seen most conveniently in the phase-space picture. Suppose we have an $N$-mode free state as given in Eq. \eqref{eq:free-cov-unsimp}.  Now, suppose we trace out one mode, say the last one; then, at the level of covariance matrices, this translates into deleting rows and columns of the corresponding mode. The remaining covariance matrix $ \pmb\Gamma'(N-1)$ then corresponds to the partial-traced state. Now, let us show that the remaining covariance matrix can be written as 
  \begin{align*}
 \pmb\Gamma'(N-1)=\mathcal{O}'\left(\oplus_{i=1}^{N-1} \nu'_i \, \mathbb{I}_2\right) \mathcal{O}^{'T} ,
 \end{align*}
 so that the remaining state is indeed free, as illustrated in Fig.~\ref{fig:tensoring-free-state}. To prove the above, it suffices to prove that $\pmb\Gamma'(N-1)$ has a similar structure to Eq. \eqref{eq:free-cov} and has all its eigenvalues greater than or equal to half. The fact that $\pmb\Gamma'(N-1)$ has eigenvalues greater than or equal to half follows from Cauchy's interlacing theorem for symmetric matrices. In particular, $\lambda_{\mathrm{min}}( \pmb\Gamma'(N-1)) \geq \lambda_{\mathrm{min}}( \pmb\Gamma_{N}) \geq 1/2$, where $\lambda_{\mathrm{min}}( \pmb\Gamma_{N})$ represents the smallest eigenvalue of $\pmb\Gamma_{N}$. 
%
 %
The fact that the reduced covariance matrix $\pmb\Gamma'(N-1)$ can be written in a similar form to that in Eq. \eqref{eq:free-cov}  is clear since  we simply have deleted the last two rows and columns.
 
Thus, having in mind that in state space, partial-tracing over a multimode Gaussian state yields another Gaussian state, one understands that the resulting $(N-1)$-mode state corresponds to a free state. Therefore, we conclude that partial-tracing one of the modes is a free operation. The same argument can be applied recursively to the partial-tracing of any number of modes.

 \begin{figure}
\centering
\includegraphics[width=55 mm]{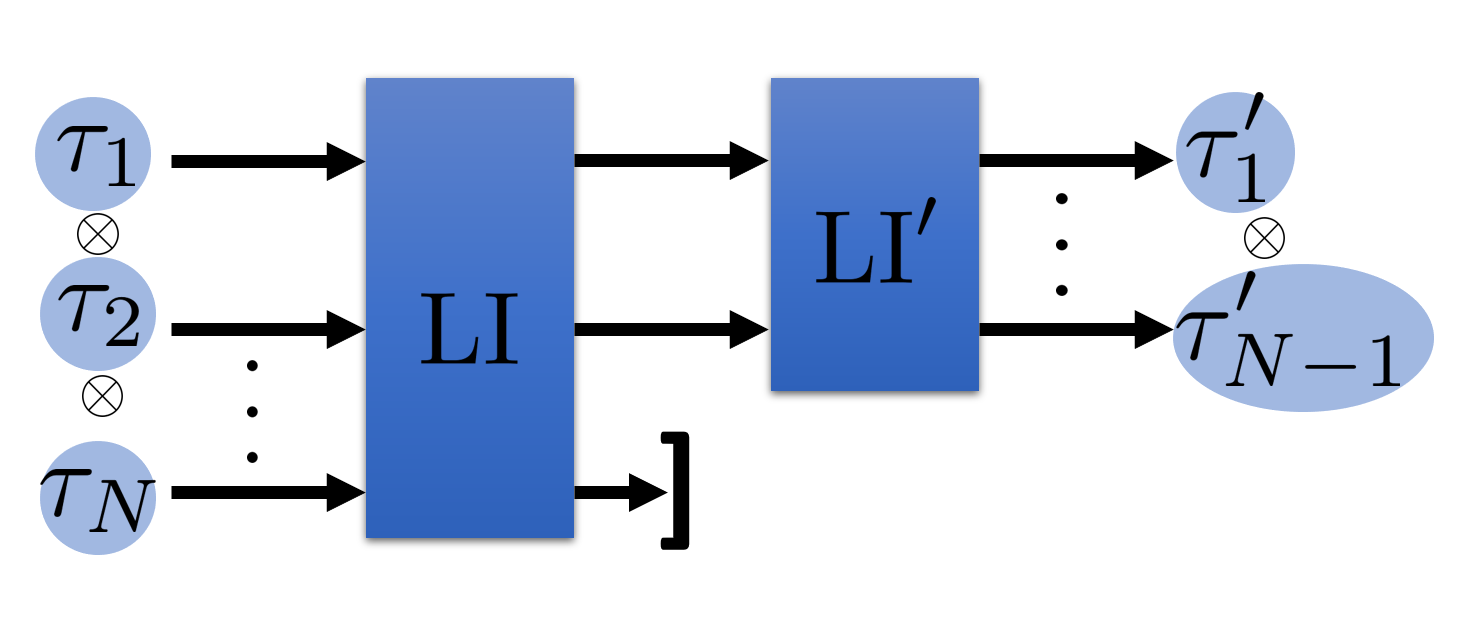}
\caption{Partial-tracing the last mode of an $N$-mode free state $\rho^f_{1\cdots N}$ results into another free state, which can be decomposed back into another product of thermal states $\tau_1'\otimes \cdots \tau_N'$ via another linear interferometer LI'. This results from Cauchy's interlacing theorem and from the structure of  free covariance matrices, Eq. \eqref{eq:free-cov}.}
\label{fig:tensoring-free-state}
\end{figure}

In the following, we denote the set of free operations as $\fo$, consisting of all operations from $O_1$ to $O_3$. It is of interest to note that $\fo$ includes the (free) quantum channels $\Phi_f$, which are generated as
\begin{align}
\label{eq:free-channel}
\Phi_f(\rho^S) = \mathrm{Tr}_{A}\left[ U_{SA}^{PG} \left(\rho^S\otimes\rho_{f}^{A}\right) U_{SA}^{PG\dagger}\right],
\end{align}
where $\rho^{S}$ is the state of the system, $\rho_{f}^{A}$ is a free state of the ancilla, and $U_{SA}^{PG}$ is a passive Gaussian unitary. In particular, if the system and ancilla are single-mode each, one recovers a thermal bosonic channel where the system mode is coupled with an ancilla mode in the thermal states $\tau_{th}^{A}$  (note that the temperature of the thermal state is unspecified here, in contrast to Refs.  \cite{Michal2013, Brandao2013, Brandao2015b}). Let us denote the Kraus operators of this single-mode channel $\Phi_f$ by $\{K_i\}$. If we want to include postselection in the context of our free operations, \textit{i.e.}, if we assume the access to individual Kraus operators, this necessarily demands that $K_i\tau_{th}K_i^\dagger$ is proportional to some thermal state for each index $i$. This extra condition on the Kraus operators does not follow from $\Phi_f$ being a free channel and, in fact, many desirable free channels do not satisfy this extra condition. For instance, consider a single-mode pure-loss channel, which is defined as
\begin{align}
\label{eq:pure-loss-channel}
\Phi_{PL}(\rho^S) = \mathrm{Tr}_{A}\left[ U_{SA}^{BS} \left(\rho^S\otimes\ket{0}\bra{0}^{A}\right) U_{SA}^{BS\dagger}\right],
\end{align}
where $U_{SA}^{BS}$ is a beam-splitter unitary and $\ket{0}$ is the vacuum state. The Kraus operators for such a channel are listed in Ref. \cite{Ivan2011}. It is easy to see that the individual Kraus operators of a pure-loss channel do not map thermal states onto thermal states (see Appendix \ref{append:postselection} for more details on postselection). Therefore, requiring postselection to be free is a very stringent condition on the allowed set of quantum operations which we choose not to consider here.

\section{Local activity monotones}
\label{sec:monotones}

In the resource theory that we define based on $\fs$ and $\fo$ (in the state space picture), {\it local activity} is deemed as a resource as it cannot be created from free states using free operations. As in other resource theories, the local activity  can be quantified using any contractive distance,
\begin{align*}
\act\left(\rho\right) = \min_{\sigma\in\fs} D(\rho,\sigma),
\end{align*}
where $D$ is a contractive distance; \textit{i.e.}, it verifies $D(\Lambda(\rho),\Lambda(\sigma))\leq D(\rho,\sigma)$ with $\Lambda$ being a completely positive trace preserving map. Here, we choose to consider a specific monotone that is based on the relative entropy. If we take $D$ to be the relative entropy, then the {\it relative entropy of local activity} can be defined as
\begin{align*}
\act\left(\rho\right) = \min_{\sigma\in\fs} \infdiv*{\rho}{\sigma}.
\end{align*}
Here the relative entropy $\infdiv*{\rho}{\sigma}=\mathrm{Tr}[\rho (\ln\rho-\ln\sigma]$ if $\mathrm{supp}(\rho)\subseteq \mathrm{supp}(\sigma)$ and $\infty$ otherwise. We now list some of the properties of $\act\left(\rho\right)$.

\bigskip
\noindent
{\it$(P_1)$ Monotonicity of the relative entropy of local activity under free operations:} To see this, let the minimum of the distance for a state $\rho$ be achieved at some free state $\sigma^*\in\fs$. Then,
\begin{align*}
\act\left(\rho\right) &= \infdiv*{\rho}{\sigma^*} \\
&\geq \infdiv*{\fo[\rho]}{\fo[\sigma^*]} \\
&= \infdiv*{\fo[\rho]}{\tilde{\sigma}} \\
&\geq \min_{\sigma\in\fs} \infdiv*{\fo[\rho]}{\sigma} =\act\left(\fo[\rho]\right),
\end{align*}
where $\tilde{\sigma} = \fo[\sigma^*]$ is necessarily some free state. The above is true for any contractive distance $D$.

\bigskip
\noindent
{\it$(P_2)$ Invariance of the relative entropy of local activity under passive Gaussian unitaries:}
Let $U^{\mathrm{PG}}$ be a passive Gaussian unitary transformation; then 
\begin{align}
\act\left( U^{\mathrm{PG}} \rho  U^{\mathrm{PG}\dagger} \right) = \act\left( \rho \right).
\end{align}
It follows from the fact that $\infdiv*{ U^{\mathrm{PG}} \rho U^{\mathrm{PG}\dagger} }{\sigma^*} = \infdiv*{\rho}{U^{\mathrm{PG}\dagger}  \sigma^* U^{\mathrm{PG}}} $ and $U^{\mathrm{PG}\dagger}  \sigma^* U^{\mathrm{PG}}$ is a free state if $\sigma^*$ is a free state. Minimizing over  $U^{\mathrm{PG}\dagger}  \sigma^* U^{\mathrm{PG}}$ is just equivalent to minimizing over free states (the set $\fs$ is closed under passive Gaussian unitaries). \\ \ \\

\noindent
{\it$(P_3)$ Relaxed subadditivity of the relative entropy of local activity:} We are looking for a relation between the relative entropy of local activity of a composite state and the relative entropy of local activity of its marginals. Consider a quantum system composed of $M+N$ modes in  a state $\rho_{AB}$; here $A(B)$ are subsystems of the composite system $AB$ with $M(N)$ modes. Let $\act\left(\rho_A\right) =\infdiv*{\rho_A}{\sigma_A^*}$ and $\act\left(\rho_B\right) =\infdiv*{\rho_B}{\sigma_B^*}$, where $\sigma_A^*$ and $\sigma_B^*$ are the free states that achieve the minima for the respective reduced states $\rho_A$ and $\rho_B$. Then,
\begin{align*}
&\act\left(\rho_A\right)  + \act\left(\rho_B\right)\\
 &=\infdiv*{\rho_A}{\sigma_A^*}+\infdiv*{\rho_B}{\sigma_B^*}\\
&=-\infdiv*{\rho_{AB}}{\rho_A\otimes \rho_B} + \infdiv*{\rho_{AB}}{\sigma_A^*\otimes\sigma_B^*}\\
&\geq -\infdiv*{\rho_{AB}}{\rho_A\otimes \rho_B} + \min_{\sigma_{AB}\in \fs}\infdiv*{\rho_{AB}}{\sigma_{AB}}\\
&= -\infdiv*{\rho_{AB}}{\rho_A\otimes \rho_B} + \act\left(\rho_{AB}\right).
\end{align*}
Therefore,
\begin{align*}
\act\left(\rho_{AB}\right) \leq \act\left(\rho_A\right)  + \act\left(\rho_B\right) +\infdiv*{\rho_{AB}}{\rho_A\otimes \rho_B},
\end{align*}
where $\infdiv*{\rho_{AB}}{\rho_A\otimes \rho_B}=S(\rho_A)+S(\rho_B)-S(\rho_{AB})\ge 0$ is the quantum mutual information between the states $\rho_A$ and $\rho_B$. As a consequence, in the special case where  
$\rho_{AB}=\rho_A\otimes \rho_B$, this ``relaxed'' form of subadditivity translates into subadditivity for product states (see also $P_4$).

The subadditivity of $A_l$ may look undesirable since it translates the fact that a composite system holds fewer resources than the sum of its components. The reason why this is possible is linked to the fact that $\fs$ is not convex. In contrast, as we will see in Sec. \ref{sec:extra-work}, the extractable work under local Gaussian unitaries $W_l$ is superadditive as a consequence of the convexity of $\IW$.

\bigskip
\noindent
{\it$(P_4)$ Additivity of the relative entropy of local activity for the product of single-mode states with zero-displacement vectors:}
We have
\begin{align}
\act\left(\otimes_{i=1}^{m}\rho_i\right) = \sum_{i=1}^{m} \act\left(\rho_i\right). 
\end{align}
The inequality ``$\le$'' is immediate from property $P_3$, but the inequality ``$\ge$'' holds too. The proof of this property is provided in the next section as we need the explicit expression of $\act$ for one-mode states.

\section{Relative entropy of local activity}
\label{sec:mono-calc}

After having defined and established properties of the relative entropy of local activity, let us explicitly calculate its value for some single-mode and two-mode cases.

\subsection{Single-mode case}
For a single mode, the only free states are thermal states at different temperatures. The expression of a thermal state in the Fock basis is given by
\begin{align*}
\tau_{\bar{n}} = \sum_{n=0}^\infty \frac{\bar{n}^n}{(\bar{n}+1)^{n+1}}\ket{n}\bra{n},
\end{align*}
where $\bar{n}$ is the mean photon number.
In the single-mode case, the relative entropy of local activity for a state $\rho$ is given by
\begin{align*}
\act\left(\rho\right) &= \min_{\bar{n}}\infdiv*{\rho}{\tau_{\bar{n}} }\nonumber\\
&= \min_{\bar{n}} \left[-S(\rho)-\mathrm{Tr}\left( \rho \ln \tau_{\bar{n}}   \right)  \right].
\end{align*}
It is easy to check that the minimum appearing in this equation is attained for $\bar{n} = \sum_n n\rho_{nn} = \bar{n}_\rho= \mathrm{Tr}[\rho\hat{a}^\dagger \hat{a}]$, where the $\rho_{nn}=\langle n | \rho | n\rangle$ represent the diagonal elements of $\rho$ in the Fock basis. Thus,
\begin{align*}
\act\left(\rho\right) &= -S(\rho) - \sum_{n=0}^{\infty}\rho_{nn}\left(n\ln \bar{n}_\rho -(n+1)\ln (\bar{n}_\rho+1)\right)\\
&= -S(\rho) +g\left(\bar{n}_\rho+1/2 \right),
\end{align*}
where
\begin{align*}
g\left(\bar{n}_\rho+1/2 \right)= (\bar{n}_\rho + 1) \ln (\bar{n}_\rho + 1)  - \bar{n}_\rho  \ln \bar{n}_\rho
\end{align*}
denotes the entropy of a thermal state $\tau_{\bar{n}_\rho}$ having the same energy as $\rho$. The local activity of a single-mode state is just the relative entropy between the state $\rho$ and the thermal state $\tau_{\bar{n}_\rho}$ with the same energy, namely $\act\left(\rho\right) = \infdiv*{\rho}{\tau_{\bar{n}_\rho}}$. Hence, the activity of a quantum system in a state $\rho$ measures a sort of distance from the thermal state having the same mean photon number. As it happens, the same quantity appears as the definition of the coherence measure of Gaussian states for a single mode \cite{Xu2016} (see also Sec. \ref{sec:comparison-coherence}).

\subsection{Application to the proof of additivity}
We are now in position to prove the additivity property $P_4$ of the relative entropy of local activity for products of single-mode states. We have
\begin{align}
&\act\left(\otimes_{i=1}^{m}\rho_i\right)\nonumber\\
&= \min_{U^{\mathrm{PG}},\bar{n}_1,\cdots,\bar{n}_m}\infdiv*{\bigotimes_{i=1}^{m}\rho_i}{U^{\mathrm{PG}}\bigotimes_{i=1}^{m}\tau_{\bar{n}_i} \, U^{\mathrm{PG}\dagger}}\nonumber\\
&= \min_{U^{\mathrm{PG}},\bar{n}_1,\cdots,\bar{n}_m}\infdiv*{ U^{\mathrm{PG}\dagger}\bigotimes_{i=1}^{m}\rho_i \, U^{\mathrm{PG}}   }{\bigotimes_{i=1}^{m}\tau_{\bar{n}_i}}\nonumber\\
&= \min_{U^{\mathrm{PG}},\bar{n}_1,\cdots,\bar{n}_m} \left[-\sum_{i=1}^{m}S(\rho_i) - \sum_{i=1}^{m}\mathrm{Tr} \left( \tilde{\rho}_i \ln \tau_{\bar{n}_i}  \right)  \right]\nonumber\\
&=  -\sum_{i=1}^{m}S(\rho_i) +\min_{U^{\mathrm{PG}}} \, \sum_{i=1}^{m} \,  g\left(\bar{n}_{\tilde{\rho}_i}+1/2 \right),
\end{align}
where $\tilde{\rho}_{1\cdots m}=U^{\mathrm{PG}\dagger}\bigotimes_{i=1}^{m}\rho_i \, U^{\mathrm{PG}}$ and $\tilde{\rho}_1=\mathrm{Tr}_{\lnot 1}(\tilde{\rho}_{1\dots m})$, and so on. Here, we have used the result of the minimization over $\bar{n}_1,\cdots,\bar{n}_m$ coming from the single-mode case. Now, we show that $\min_{U^{\mathrm{PG}}} \sum_{i=1}^{m}g\left(\bar{n}_{\tilde{\rho}_i}+\frac{1}{2} \right) = \sum_{i=1}^{m}g\left(\bar{n}_{\rho_i}+\frac{1}{2} \right)$. To prove this, let us note that $U^{\mathrm{PG}}$ transforms the annihilation operators as follows:
\begin{align}
\hat{\tilde{a}}_i = \sum_{j=1}^{m} u_{ij} \hat{a}_j,
\end{align}
where $u_{ij}$ are the matrix elements of an arbitrary unitary matrix $u$. Note that $\bar{n}_{\tilde{\rho}_i}=\mathrm{Tr}[\hat{\tilde{a}}_i^\dagger \hat{\tilde{a}}_i \bigotimes_{i=1}^{m}\rho_i] = \langle\hat{\tilde{a}}_i^\dagger \hat{\tilde{a}}_i \rangle$ and $\bar{n}_{\rho_i}=\mathrm{Tr}[\hat{a}_i^\dagger \hat{a}_i \bigotimes_{i=1}^{m}\rho_i] = \langle\hat{a}_i^\dagger \hat{a}_i\rangle$. We have
\begin{align*}
\bar{n}_{\tilde{\rho}_i} +\frac{1}{2} &=  \sum_{j,l=1}^{m} u_{ij}^*   u_{il} \left\langle  \hat{a}_j^\dagger \hat{a}_l \right\rangle+\frac{1}{2}\nonumber\\
&=\sum_{j=1}^{m} u_{ij}^*   u_{ij} \left\langle  \hat{a}_j^\dagger \hat{a}_j \right\rangle + \sum_{\substack{j,l=1\\j\neq l}}^{m} u_{ij}^*   u_{il} \left\langle  \hat{a}_j^\dagger \hat{a}_l \right\rangle+\frac{1}{2}\nonumber\\
&=\sum_{j=1}^{m} u_{ij}^*   u_{ij} \bar{n}_{\rho_j} + \sum_{\substack{j,l=1\\j\neq l}}^{m} u_{ij}^*   u_{il} \left\langle  \hat{a}_j^\dagger\right\rangle \left\langle \hat{a}_l \right\rangle+\frac{1}{2} \nonumber\\
&=\sum_{j=1}^{m} u_{ij}^* u_{ij}  \left(\bar{n}_{\rho_j} +\frac{1}{2}\right)   ,
\end{align*}
where for the last step we have assumed that all the Gaussian states $\rho_i$ $(i=1,\cdots, m)$ have a zero-displacement vector. Now, from the concavity of function $g$, we have 
\begin{align*}
\sum_{i=1}^{m} g\left(\bar{n}_{\tilde{\rho}_i}+\frac{1}{2}\right) &\geq \sum_{i=1}^{m} \sum_{j=1}^{m} u_{ij}^*   u_{ij} \,  g\left(\bar{n}_{\rho_j}+\frac{1}{2}\right)\nonumber\\
&= \sum_{j=1}^{m}   \left(\sum_{i=1}^{m} u_{ij}^*   u_{ij} \right)g\left(\bar{n}_{\rho_j}+\frac{1}{2}\right)\nonumber\\
&= \sum_{j=1}^{m} g\left(\bar{n}_{\rho_j}+\frac{1}{2}\right),
\end{align*}
where equality is achieved when $U^{PG}=\mathbb{I}$ or $u_{ij}=\delta_{ij}$.
Thus, applying a passive Gaussian unitary $U^{\mathrm{PG}}$ can only increase the value of $\sum_{i=1}^{m} g\left(\bar{n}_{\tilde{\rho}_i}+\frac{1}{2}\right)$. 
Hence, using $P_3$ and above, we have
\begin{align*}
\act\left(\otimes_{i=1}^{m}\rho_i\right) &=  -\sum_{i=1}^{m}S(\rho_i) + \sum_{i=1}^{m}g\left(\bar{n}_{\rho_i}+1/2\right)\\
&=\sum_{i=1}^{m} \act\left(\rho_i\right).
\end{align*}

\subsection{Two-mode case}
Consider a Gaussian state $\rho_1({\bf d}, \pmb\Gamma_1)$ of two modes with covariance matrix
\begin{align*}
\pmb\Gamma_1=\begin{pmatrix}
A & C\\
C^T& B
\end{pmatrix},
\end{align*}
and the displacement vector ${\bf d}=\begin{pmatrix}
d_1&d_2&d_3&d_4
\end{pmatrix}^T$.
Define $\alpha=\mathrm{Tr}[A]$, $\beta=\mathrm{Tr}[B]$, $c=\mathrm{Tr}[C]$, and $\upsilon=-\mathrm{Tr}[\omega C]$.  Furthermore, define $\tilde{\alpha}=(\alpha+\beta+\tilde{d}_1)$, $\tilde{\beta}=(\alpha-\beta+\tilde{d}_2)$, $\tilde{c}=(c+\tilde{d}_3)$ and $\tilde{\upsilon}=(\upsilon+\tilde{d}_4)$, where $\tilde{d}_1=(d_1^2+d_2^2+d_3^2+d_4^2)$, $\tilde{d}_2=(d_1^2+d_2^2-d_3^2-d_4^2)$, $\tilde{d}_3=(d_1d_3+d_2d_4)$, and $\tilde{d}_4=(d_1d_4-d_2d_3)$.  Finally, let us define 
\begin{align*}
\mathcal{S}&=\begin{pmatrix}
R_1 & 0\\
0& R_2 
\end{pmatrix} \begin{pmatrix}
\cos\theta \mathbb{I}_2 & \sin\theta\mathbb{I}_2\\
-\sin\theta \mathbb{I}_2 &\cos\theta \mathbb{I}_2 
\end{pmatrix} \begin{pmatrix}
R_3 & 0\\
0& R_4 
\end{pmatrix} \\
&=\begin{pmatrix}
\cos\theta R_1 R_3 & \sin\theta R_1 R_4\\
-\sin\theta R_2 R_3 &\cos\theta R_2 R_4 
\end{pmatrix},
\end{align*}
and for $i=1,\cdots,4$
\begin{align*}
R_i=\begin{pmatrix}
\cos\phi_i& \sin\phi_i\\
-\sin\phi_i&\cos\phi_i 
\end{pmatrix},
\end{align*}
with $\delta\phi=\phi_1-\phi_2$. We see that the angles $\phi_3$ and $\phi_4$ do not actually matter in the definition of the most general case. We show in Appendix \ref{append:mono-two} that the closest free state to $\rho_1({\bf d},\pmb\Gamma_1)$ corresponds to 
\begin{align*}
\pmb\Gamma_{\mathrm{free}}=\mathcal{S}\begin{pmatrix}
b_1\mathbb{I}_2 & 0\\
0& b_2 \mathbb{I}_2
\end{pmatrix} \mathcal{S}^T,
\end{align*}
with the following definitions:
\begin{align*}
&b_1= \frac{1}{4} \left[\tilde{\alpha}+\sqrt{\tilde{\beta}^2+4(\tilde{c}^2+\tilde{\upsilon}^2)}\right],\\
&b_2= \frac{1}{4} \left[\tilde{\alpha}-\sqrt{\tilde{\beta}^2+4(\tilde{c}^2+\tilde{\upsilon}^2)}\right],\\
&\theta=-\frac{1}{2}\arctan\left( \frac{2\sqrt{\tilde{c}^2+\tilde{\upsilon}^2}}{\tilde{\beta}} \right),\\
&\delta\phi=\arctan(\tilde{\upsilon}/\tilde{c}).
\end{align*} 
In this case, the relative entropy of local activity is given by
\begin{align*}
\act\left(\rho\right)&=\sum_{i=1}^2 \left[g(b_i)-g(\nu_i)\right],
\end{align*} 
where the $\nu_i$ are the symplectic eigenvalues of $\rho$. The details of the calculations are provided in Appendix \ref{append:mono-two}.
%
%

\subsection{Examples}

In the following, we give some examples of values of the relative entropy of local activity for some quantum states of interest.

\bigskip
\noindent
{\it$(1)$ Fock states:} 
The Fock state $\ket{n}$ has a value of the relative entropy of local activity given by $\act= g(n+\frac{1}{2})$.

\bigskip
\noindent
{\it$(2)$ Squeezed state:} 
The squeezed vacuum state $\ket{\mathrm{Sq}}$ of squeezing parameter $r$ has a value of the relative entropy of local activity given by $\act= g(\sinh^2 r+\frac{1}{2})$.

\bigskip
\noindent
{\it$(3)$ Coherent state:} 
The coherent state $\ket{\alpha}$ of complex amplitude $\alpha$ has a value of the relative entropy of local activity given by $\act= g(|\alpha|^2+\frac{1}{2})$.

\bigskip
\noindent
{\it$(4)$ Two-mode squeezed vacuum state:} 
The two-mode squeezed vacuum state $\ket{\mathrm{TMS}}$ of squeezing parameter $r$ has a value of the relative entropy of local activity given by $\act= 2 \, g(\sinh^2 r + \frac{1}{2})$. This results from the fact that it can be obtained with a 50:50 beam splitter (\textit{i.e.}, a passive Gaussian unitary) applied on a product of two single-mode squeezed vacuum states (with squeezing of orthogonal quadratures).

\subsection{Comparison with coherence of Gaussian states}
\label{sec:comparison-coherence}
The relative entropy of coherence for Gaussian states was defined in Ref. \cite{Xu2016}. For the sake of clarity, we remind the reader of its expression here. For an $N$-mode Gaussian state $\rho$, the Gaussian (relative entropy of) coherence is given by \cite{Xu2016}
\begin{align*}
\mathcal{C}_G(\rho)=\sum_{i=1}^N \left[g\left(n_{\rho_i} + \frac{1}{2}\right)-g(\nu_i)\right],
\end{align*} 
where the $\nu_i$ are the symplectic eigenvalues of $\rho$ and $n_{\rho_i} =\mathrm{Tr}[\hat{a}_i^\dagger \hat{a}_i \rho]$ is the mean number of photons of the state $\rho_i$. We note that
\begin{align*}
\act\left(\rho\right) = \min_{U^{\mathrm{PG}},\tau_1,\cdots,\tau_N} \infdiv*{\rho}{U^{\mathrm{PG}}(\tau_1\otimes\cdots\otimes \tau_N)U^{\mathrm{PG}\dagger}}\\
\leq \min_{\tau_1,\cdots,\tau_N} \infdiv*{\rho}{\tau_1\otimes\cdots\otimes \tau_N}
:=\mathcal{C}_G(\rho).
\end{align*}
Also, we have shown that $\mathcal{C}_G(\rho)$ coincides with the relative entropy of local activity in the single-mode case. This is precisely because our free states and the Gaussian incoherent states coincide in the case of one mode. However, for more than one mode, the two monotones are different in general. Still, in the special case of an $N$-mode state $\rho(N)$ that can be written as $U^\mathrm{PG}(\rho_1\otimes\cdots\otimes \rho_N)U^{\mathrm{PG}\dagger}$, we recover $\act\left(\rho(N)\right)=\mathcal{C}_G(\rho(N))$. For instance, this happens to be the case for a two-mode squeezed vacuum state.

\section{Extractable work under local Gaussian unitaries}
\label{sec:extra-work}
It is known that passive states are those states from which no work can be extracted unitarily \cite{Pusz1978, Lenard78}. Similarly, if we restrict ourselves to extracting work using only Gaussian unitary transformations, the Gaussian-passive states are the states from which no work can be extracted using Gaussian unitaries (the Gaussian-passive states and their characterization have been presented in Ref. \cite{Brown2016}). Here, we consider local Gaussian work extraction scenarios as pictured in Fig. \ref{fig:scenario}, where multiple parties try to extract work from a multimode system via local Gaussian unitary transformations (i.e., single-mode squeezers) assisted with a global linear interferometer (LI). We will define a measure of the maximum local Gaussian extractable work, noted $W_l$,  and show that it is a monotone under free operations $\fo$. From now on, we rather use the phase-space picture and work with displacement vectors and covariance matrices, which is natural since the energy of a state only depends on first- and second-order moments of the field operators. Let us consider an $N$-mode arbitrary state $\rho (\pmb\Gamma,\bf {\bar{x}})$ with covariance matrix $\pmb\Gamma$ and displacement vector $\bf {\bar{x}}$. The local Gaussian extractable work assisted with LIs is defined as
\begin{align*}
&{W}_{l}(\rho)\\
&= \max_{U^{PG},\mathcal{D}_\alpha,U_{sq}}\mathrm{Tr}\left[H\left(\rho-U_{sq} U^{PG} \mathcal{D}_\alpha  \rho \mathcal{D}^{\dagger}_\alpha U^{PG\dagger} U_{sq}^{\dagger} \right)\right]\\
&= \max_{U^{PG},\mathcal{D}_\beta,U_{sq}}\mathrm{Tr}\left[H\left(\rho- \mathcal{D}_\beta U_{sq}  U^{PG} \rho U^{PG\dagger} U_{sq}^{\dagger}\mathcal{D}^{\dagger}_\beta \right)\right],
\end{align*} 
where the maximum is taken over all LIs $(U^{PG})$, single-mode squeezers $(U_{sq})$ and displacement operators ($\mathcal{D}_\alpha$ or $\mathcal{D}_\beta$). The second line follows from the fact that if we exchange a displacement operator with parameter $\alpha$ with a combination of squeezers and/or LIs, it remains a displacement operator (with another parameter $\beta$). This means that the application of displacement operators before or after the squeezers does not change the maximum local Gaussian extractable work. Thus, in order to extract work locally using Gaussian unitaries, we may first apply with no loss of generality a displacement operator on each mode and extract the available work due to $\bf {\bar{x}}$, thereby making $\bf {\bar{x}}={\bf 0}$ for subsequent work extraction. Hence, the component of the extractable work due to displacements is trivial and may be disregarded for simplicity.



The energy of an $N$-mode state $\rho (\pmb\Gamma,\bf {\bar{x}})$ is given by Eq. \eqref{eq:ene-exp}, and, in particular, the energy of $\rho (\pmb\Gamma,\bf {0})$ is given by $\mathrm{Tr}[\pmb\Gamma]/2$.  We will now deduce from it an expression for the local extractable work $\mathcal{W}_{l}(\pmb\Gamma)$ in  phase space (we use the notation $\mathcal{W}_{l}$ when it is written as a function of the CM and $W_l$ when it is a function of the state). A particularly relevant tool in this analysis will be the so-called Bloch-Messiah decomposition \cite{Weedbrook2012, Arvind1995, Braunstein2005} of symplectic matrices, which states that for any symplectic matrix ${\bf\mathcal{S}}$, ${\bf\mathcal{S}}= {\bf O}_1 \oplus_{i} \! {\bf\mathfrak{S}}(r_i) \, {\bf O}_2$, where ${\bf O}_1$ and ${\bf O}_2$ are orthogonal symplectic matrices (\textit{i.e.}, LIs) and ${\bf\mathfrak{S}}(r_i)$ are single mode squeezers, namely
\begin{align*}
{\bf \mathfrak{S}}(r_i)=\begin{pmatrix}
e^{r_i} & 0\\
0 & e^{-r_i}
\end{pmatrix}.
\end{align*}
Another useful property is that, given any $2N\times 2N$ positive definite-matrix $\pmb\Gamma$ \cite{Bhatia2015}
\begin{align}
\label{eq:bhatia}
\min_{\bf{\mathcal{S}}:\bf \mathcal{S}\pmb\Omega\mathcal{S}^T=\pmb\Omega} \mathrm{Tr}[\mathcal{S}\pmb\Gamma\mathcal{S}^T]=\mathrm{Str}[\pmb\Gamma] \equiv 2\sum_{i=1}^{N}\nu_i,
\end{align}
where $\mathrm{Str}[\pmb\Gamma]$ represents the symplectic trace of $\pmb\Gamma$, and is equal to twice the sum of the symplectic eigenvalues $\nu_i$.
Using this, the local Gaussian extractable work assisted with global LIs from an $N$-mode Gaussian state with covariance matrix $\pmb\Gamma$ (and displacement vector $\bf {\bar{x}}={\bf 0}$) is defined as
\begin{align}
\label{eq:work-def}
&\mathcal{W}_{l}(\pmb\Gamma)\nonumber\\
&:=\frac{1}{2}\max_{{\bf O},\{{\bf\mathfrak{S}}(r_i)\}}\mathrm{Tr}\left[ \pmb\Gamma - \oplus_{i}\mathfrak{S}(r_i) \, {\bf O}\pmb\Gamma {\bf O}^T\oplus_{i}\!\mathfrak{S}(r_i) \right]\nonumber\\
&=\frac{1}{2}\left(\mathrm{Tr}\left[ \pmb\Gamma\right] - \mathrm{Str}\left[\pmb\Gamma \right]\right)\nonumber\\
&=\frac{1}{2}\mathrm{Tr}\left[ \pmb\Gamma - \pmb\Gamma_{th} \right]=\frac{1}{2}\sum_{i=1}^{2N}\lambda_i- \sum_{i=1}^{N}\nu_i,
\end{align}
where $\lambda_i$ and $\nu_i$ are the eigenvalues and symplectic eigenvalues of $\pmb\Gamma$, respectively. The second line in the chain of equalities above follows from the Bloch-Messiah decomposition and Eq. \eqref{eq:bhatia}. In particular,
\begin{align}
&\min_{{\bf O},\{{\bf\mathfrak{S}}(r_i)\}}\mathrm{Tr}\left[\oplus_{i}\mathfrak{S}(r_i) \, {\bf O}\pmb\Gamma {\bf O}^T\oplus_{i}\!\mathfrak{S}(r_i) \right]\nonumber\\
& = \min_{\bf{\mathcal{S}}:\bf \mathcal{S}\pmb\Omega\mathcal{S}^T=\pmb\Omega} \mathrm{Tr}[\mathcal{S}\pmb\Gamma\mathcal{S}^T]=\mathrm{Str}[\pmb\Gamma].
\end{align}
The third line in Eq. \eqref{eq:work-def} follows from the fact that the symplectic trace of the covariance matrix $\pmb\Gamma$ is equal to the trace of the covariance matrix $\pmb\Gamma_{th}$ of the product of thermal states having the same symplectic spectrum.

Next, we list some of the important properties of the local extractable work $\mathcal{W}_{l}(\pmb\Gamma)$.
\bigskip
\newline
\noindent
{\it ($P'_1$) The local extractable work $\mathcal{W}_{l}(\pmb\Gamma)$ is positive semidefinite:} Since we have
$$\mathrm{Str}[\pmb\Gamma]=\min_{\bf{\mathcal{S}}:\bf \mathcal{S}\pmb\Omega\mathcal{S}^T=\pmb\Omega} \mathrm{Tr}[\mathcal{S}\pmb\Gamma\mathcal{S}^T] \leq \mathrm{Tr}[\pmb\Gamma],$$ 
it is clear that $\mathcal{W}_{l}(\pmb\Gamma)\geq 0$.
\bigskip
\newline

\noindent
{\it ($P'_2$) The local extractable work $\mathcal{W}_{l}(\pmb\Gamma)$ is a convex function of covariance matrices $\pmb\Gamma$:}
From Eq. \eqref{eq:bhatia}, it follows that given two positive-definite matrices $C$ and $D$ \cite{Bhatia2015}
\begin{align}
\label{eq:conc-str}
\mathrm{Str}[C+D]\geq \mathrm{Str}[C]+\mathrm{Str}[D].
\end{align}
This property of the symplectic trace implies immediately that $\mathcal{W}_{l}(\pmb\Gamma)$ is convex; \textit{i.e.}, 
if a covariance matrix is given by $\pmb\Gamma=\sum_{j=1}^m p_j \pmb\Gamma^{(j)}$, with $p_j\ge 0$ and $\sum_j p_j=1$, then 
\begin{align}
&\mathcal{W}_{l}\left(\sum_{j=1}^m p_j \pmb\Gamma^{(j)}\right)\nonumber\\
&=\frac{1}{2}\left( \sum_{j=1}^m p_j \mathrm{Tr}\left[\pmb\Gamma^{(j)}\right] - \mathrm{Str}\left[\sum_{j=1}^m p_j \pmb\Gamma^{(j)} \right]\right)\nonumber\\
&\leq \sum_{j=1}^m p_j \, \mathcal{W}_{l}\left(\pmb\Gamma^{(j)}\right).
\end{align}
This simply means that if we ``forget" which term $j$ the state belongs to within a convex mixture, this can only reduce the maximum local extractable work.

Note also that for a free state as defined in Sec. \ref{sec:basic}, \textit{i.e.}, a Gaussian state with covariance matrix of the form $\pmb\Gamma_{free}={\bf O} \pmb\Gamma_{th}{\bf O}$, the symplectic eigenvalues and eigenvalues of the covariance matrix coincide, so that $\mathcal{W}_{l}\left(\pmb\Gamma_{free}\right)=0$. Moreover, from the convexity of $\mathcal{W}_{l}$, we have
\begin{align*}
\mathcal{W}_{l}\left(\sum_{j=1}^m p_j \pmb\Gamma^{(j)}_{free}\right)=0,
\end{align*}
so the set of free covariances matrices is convex.
We note here that $\sum_{j=1}^m p_j \pmb\Gamma^{(j)}_{free}$ can always be written as $\pmb\Gamma_{free}={\bf O}\pmb\Gamma_{th}{\bf O}^T$, where ${\bf O}$ is some orthogonal symplectic matrix.
\bigskip
\newline

\noindent
{\it ($P'_3$) The local extractable work $\mathcal{W}_{l}\left(\pmb\Gamma\right)=0$ if and only if $\pmb\Gamma= {\bf O}_2\pmb\Gamma_{th}{\bf O}_2^T$:} As already mentioned in $P'_2$, if $\pmb\Gamma= {\bf O}_2\pmb\Gamma_{th}{\bf O}_2^T$,  then trivially $\mathcal{W}_{l}\left(\pmb\Gamma\right)=0$. For proving the converse, we use the Williamson theorem and  Bloch-Messiah decomposition in order to write $\pmb\Gamma={\bf O}_1 \bigoplus_{i}{\bf\mathfrak{S}}(r_i){\bf O}_2\pmb\Gamma_{th}{\bf O}_2^T \bigoplus_{i}{\bf\mathfrak{S}}(r_i){\bf O}_1^T$, so we have
\begin{align*}
&\mathcal{W}_{l}\left(\pmb\Gamma\right)=0\\
&\Rightarrow \mathrm{Tr}\left[ \oplus_{i}{\bf\mathfrak{S}}(r_i) \, {\bf O}_2\pmb\Gamma_{th}{\bf O}_2^T \oplus_{i}\!{\bf\mathfrak{S}}(r_i) \right] - \mathrm{Tr}[\pmb\Gamma_{th}]=0\\
&\Rightarrow \oplus_{i}{\bf\mathfrak{S}}(r_i)=\mathbb{I}_{2N}.
\end{align*}
The last implication follows from the observation that the energy of a state of the form ${\bf O}_2\pmb\Gamma_{th}{\bf O}_2^T$ is always increased if we apply some single-mode squeezers on it. Therefore, the local extractable work vanishes if and only if the covariance matrix is free, namely
\begin{align}
&\mathcal{W}_{l}\left(\pmb\Gamma\right)=0\Leftrightarrow \pmb\Gamma= {\bf O}_2\pmb\Gamma_{th}{\bf O}_2^T.
\end{align}

\bigskip
\noindent
{\it ($P'_4$) The local extractable work  $\mathcal{W}_{l}(\pmb\Gamma)$ is superadditive:}  In other words, 
$$\mathcal{W}_{l}(\pmb\Gamma)\geq \mathcal{W}_{l}(\pmb\Gamma_A)+\mathcal{W}_{l}(\pmb\Gamma_B),$$ where the $N$-mode covariance matrix $\pmb\Gamma$ is partitioned into two subsets $A$ and $B$ consisting of $m$ and $(N-m)$ modes, respectively, that is,
\begin{align}
\pmb\Gamma=\begin{pmatrix}
\pmb\Gamma_A & \pmb\Gamma_{AB}\\
\pmb\Gamma_{AB}^{T}& \pmb\Gamma_B .
\end{pmatrix}
\end{align}
Without loss of generality, we can find two symplectic matrices $\mathcal{S}_A$ and $\mathcal{S}_B$ such that 
\begin{align}
(\mathcal{S}_A \oplus\mathcal{S}_B)\pmb\Gamma (\mathcal{S}_A \oplus\mathcal{S}_B)^T=\begin{pmatrix}
\pmb\Gamma_A^{th} & \pmb\Gamma^{'}_{AB}\\
\pmb\Gamma_{AB}^{'T}& \pmb\Gamma_B^{th}
\end{pmatrix}:=\pmb\Gamma'.
\end{align}
 We have
\begin{align}
\label{eq:work-sup-add}
\mathcal{W}_{l}(\pmb\Gamma) &= \mathrm{Tr}\left[ \pmb\Gamma \right] -\mathrm{Str}\left[\pmb\Gamma\right]\nonumber\\
&=\mathrm{Tr}\left[ \pmb\Gamma_A \right]+\mathrm{Tr}\left[ \pmb\Gamma_B \right]-\mathrm{Str}\left[\pmb\Gamma'\right]\nonumber\\
&\geq \mathrm{Tr}\left[ \pmb\Gamma_A \right]+\mathrm{Tr}\left[ \pmb\Gamma_B \right]-\mathrm{Tr}\left[\pmb\Gamma'\right]\nonumber \\
&= \mathrm{Tr}\left[ \pmb\Gamma_A \right] -\mathrm{Tr}\left[\pmb\Gamma_A^{th}\right] +\mathrm{Tr}\left[ \pmb\Gamma_B \right]-\mathrm{Tr}\left[\pmb\Gamma_B^{th}\right]\nonumber\\
&=\mathcal{W}_{l}(\pmb\Gamma_A)+\mathcal{W}_{l}(\pmb\Gamma_B).
\end{align}
In the second line, we have used the invariance of the symplectic trace under symplectic transformations. In the third line, we used the fact that $\mathrm{Str}[\pmb\Gamma']\leq \mathrm{Tr}[\pmb\Gamma']$, which follows from Eq. \eqref{eq:bhatia}. It is easy to see that for product states, i.e., $\pmb\Gamma=\pmb\Gamma_A\oplus\pmb\Gamma_B$, the equality holds. Thus, we have shown that $\mathcal{W}_{l}(\pmb\Gamma)$ is superadditive, which reflects the fact that more work can potentially be extracted from the joint system rather than from its two components separately. Indeed, we can use a LI involving all $n$ modes instead of two separate LIs on the first $m$ modes and last $(N-m)$ modes. Furthermore, the superadditivity property implies that   
\begin{align*}
\mathcal{W}_{l}(\pmb\Gamma) \geq \sum_{i=1}^{N} \mathcal{W}_{l}\left(\pmb\Gamma_i\right),
\end{align*}
where $\pmb\Gamma_i$ is the covariance matrix of the $i$th mode.

\bigskip
\noindent
{\it ($P'_5$) $\mathcal{W}_{l}(\pmb\Gamma)$ is monotonically decreasing under free operations $\fo$:} 
We have seen that $\mathcal{W}_{l}\left( \pmb\Gamma \right)$ is a faithful resource measure, \textit{i.e.}, $\mathcal{W}_{l}\left( \pmb\Gamma \right)\geq 0$, while the equality holds if and only if the CM is free, $\pmb\Gamma= {\bf O}_2\pmb\Gamma_{th}{\bf O}_2^T$.
Now, to obtain a really meaningful resource measure, it remains to be checked that $\mathcal{W}_{l}(\pmb\Gamma)$ decreases under free operations as defined in Sec. \ref{sec:basic}. First, it is obvious to see that $\mathcal{W}_{l}(\pmb\Gamma)=\mathcal{W}_{l}({\bf O}\pmb\Gamma {\bf O}^T)$, where ${\bf O}$ is an orthogonal symplectic matrix. Next, we have seen that 
$\mathcal{W}_{l}(\pmb\Gamma_A\oplus\pmb\Gamma_B)= \mathcal{W}_{l}(\pmb\Gamma_A)+\mathcal{W}_{l}(\pmb\Gamma_B)$, which ensures that tensoring cannot increase the extractable work. Finally, we need to prove that $\mathcal{W}_{l}\left( \pmb\Gamma \right)$ monotonically decreases under partial tracing. This follows trivially from the superadditivity property $P'_4$ (which is actually a stronger result than monotonicity under partial tracing). Thus, we have shown that $\mathcal{W}_{l}(\pmb\Gamma)$ is a superadditive measure that is monotonic under partial tracing. 

\bigskip
To summarize this section, we have defined the extractable work $W_l$ or $\mathcal{W}_{l}$ via local Gaussian unitaries (single-mode squeezers) assisted with global linear interferometers (LIs) from a multimode bosonic system, and have proved that this is a monotone under the free operations $\fo$ of our resource theory of local activity; see Sec. \ref{sec:basic}. Our work extraction scenario (see Fig. \ref{fig:scenario}) can be understood operationally as follows. Given a bosonic system in a quantum state (Gaussian or not) with covariance matrix $\pmb\Gamma$ and displacement vector $\bf\bar{x}$, 
we first apply local displacements to extract work that is there due to $\bf\bar{x}$. This step makes $\bf\bar{x}=0$. Since we are allowed to use LIs before extracting work, we can appropriately convert (using the Bloch-Messiah decomposition) the given covariance matrix $\pmb \Gamma \xrightarrow[\text{}]{\text{LI}} \left(\oplus_{i}{\bf \mathfrak{S}}(r_i) \, {\bf O}_2\pmb\Gamma_{th} {\bf O}_2^T\oplus_{i}\! {\bf \mathfrak{S}}(r_i)\right)$, where $\pmb\Gamma_{th}$ has the same symplectic spectrum as $\pmb\Gamma$. Now, the maximal Gaussian local work is obtained from such a state by applying single-mode squeezers $\bigoplus_{i}{\bf \mathfrak{S}}(r_i)$. After this last step, no further work can be obtained using local Gaussian unitaries, \textit{i.e.}, squeezers, and the covariance matrix becomes free, namely ${\bf O}_2\pmb\Gamma_{th} {\bf O}_2^T$.

\section{Distillation}
\label{sec:distillation}
One of the key advantages offered by the resource-theoretic paradigm of any resource is the succinct description of its interconversion (specifically, its distillation and formation). However, not any interconversion of resources is possible in every resource theory \cite{Brandao2015, Lami2018}. For instance, it is well known that Gaussian entanglement cannot be distilled using Gaussian local operations and classical communication only \cite{Eisert2002, Fiurasek2002, Giedke2002}. More recently, under reasonable assumptions, a general no-go theorem for the distillation of Gaussian resources under resource nongenerating operations was proved \cite{Lami2018}. In this section, we show the possibility of distillation of resources considered in the present work and discuss the consequences it entails, as well as some open problems.
\subsection{For local activity}
Let us first consider the following question. Is it possible to have a deterministic transformation
\begin{align*}
\rho\otimes\rho \xrightarrow[\mathrm{free ~operations}]{?} \sigma,
\end{align*}
where $\rho$ and $\sigma$ are both single-mode states such that $A_l(\sigma)>A_l(\rho)$? We show that this is not possible. Let us consider that the covariance matrix of $\rho$ is given by a $2\times 2$ matrix $\pmb\gamma$, so that the covariance matrix of $\rho\otimes\rho$ is given by
\begin{align*}
\pmb\Gamma (\rho\otimes\rho)=\begin{pmatrix}
\pmb\gamma & 0_{2\times 2} \\
0_{2\times 2} & \pmb\gamma
\end{pmatrix}.
\end{align*}
The free operation one can apply on the two copies of $\rho$ is just a beam splitter with phase shifters, and is represented by 
\begin{align*}
\mathcal{S}=\begin{pmatrix}
\cos\theta R_1 R_3 & \sin\theta R_1 R_4\\
-\sin\theta R_2 R_3 & \cos\theta R_2 R_4
\end{pmatrix},
\end{align*}
where the $R_i$'s are phase shifters for each $i=1,\cdots,4$. Now, let $\pmb\Gamma' =\mathcal{S}\pmb\Gamma(\rho\otimes\rho)\mathcal{S}^T$ and denote by $\pmb\gamma'_1$ and $\pmb\gamma'_2$ the local covariance matrices of the output states $\rho_1'$ and $\rho_2'$ of mode one and mode two, respectively. Then,
 \begin{align}
 \label{eq:loc-cov-st} 
&R_1^T\pmb\gamma'_1 R_1= \left( \cos^2\theta R_3 \pmb\gamma R_3^T + \sin^2\theta R_4 \pmb\gamma R_4^T\right),\nonumber\\
&R_2^T\pmb\gamma'_2R_2= \left(\sin^2\theta R_3 \pmb\gamma R_3^T + \cos^2\theta R_4 \pmb\gamma R_4^T \right).
\end{align}
It is clear that the mean number of photons in state $\rho_1'$ (of CM $\pmb\gamma_1'$) is the same as that of the initial state $\rho$ (of CM $\pmb\gamma$). Therefore,
\begin{align*}
&\act(\rho'_1) - \act(\rho) = S(\rho)-S(\rho'_1)\leq 0.
\end{align*}
The last inequality follows from the linear entropy power inequality for beam splitters. Obviously, a similar inequality is true for the state having covariance matrix $\pmb\gamma'_2$. This shows that it is impossible to increase the local activity of any one-mode state under free operations starting from two copies of the one-mode state. However, we show in the following that this is not the case if we start from two copies of a two-mode states.

We now provide an example which shows that the deterministic distillation of local activity from a two-mode state is in principle possible, starting from two copies of that state. We consider two copies of a nonfree two-mode Gaussian state $\rho$ with relative entropy of local activity $A_l(\rho)$, and show that by using free operations on the four modes, one can generate a two-mode state $\sigma$ with relative entropy of local activity $A_l(\sigma)$ such that $A_l(\sigma)>A_l(\rho)$. In particular, let us start with a two-mode Gaussian state that is in a product of a squeezed thermal state and a vacuum state, \textit{i.e.}, a state with covariance matrix $\pmb\Gamma (\rho)$ defined as
\begin{align*}
\pmb\Gamma (\rho)=\frac{1}{2}\begin{pmatrix}
1 & 0 & 0 & 0\\
0 & 16 &0 & 0\\
0 & 0 & 1 & 0\\
0 & 0 & 0 & 1
\end{pmatrix}.
\end{align*}
Now, the free operation that we choose to apply on $\rho$ is a passive Gaussian unitary $U$ which acts on the annihilation operators of the four modes as
\begin{align*}
U=\frac{1}{2}\begin{pmatrix}
1 & 1 & 1& 1\\
1 & -i &-1 & i\\
1 & -1 & 1 & -1\\
1 & i & -1 & -i
\end{pmatrix}.
\end{align*}
Let $\mathcal{O}$ be the corresponding symplectic transformation, so that $\pmb\Gamma'=\mathcal{O}\pmb\Gamma(\rho\otimes\rho)\mathcal{O}^T$. The first two-mode state has a covariance matrix $\pmb\Gamma (\sigma)$ that is given by
\begin{align*}
\pmb\Gamma (\sigma)=\bigoplus_{i=1}^{2}\frac{1}{4}\begin{pmatrix}
2 & 0 \\
0 & 17
\end{pmatrix}.
\end{align*}
It is easy to check that $A_l(\rho)=0.7621$ and $A_l(\sigma)=1.0019>A_l(\rho)$. This example tells us that it is in principle possible to distill the relative entropy of local activity. This begs us to define a standard unit resource for distillation purposes. However, the (non)existence of a standard unit resource is a mathematically involved problem and requires a separate exposition which we leave for future explorations.


Now, if we assume that it is possible to convert deterministically $n$ copies of a state $\rho$ into $m$ copies of state, say $\sigma$, using free operations, then the rate $\mathcal{R}_{\rho\rightarrow \sigma}$ of conversion can be defined as
\begin{align}
\mathcal{R}_{\rho\rightarrow \sigma}= \frac{m}{n} \leq \frac{\act{(\rho)}}{\act{(\sigma)}}.
\end{align}
The inequality above follows from the additivity and monotonicity of the relative entropy of activity. It will be very interesting to consider the asymptotic scenario for conversion, prove its existence, and show that the optimal asymptotic rate of conversion is given by a ratio of relative entropies of activity. However, we will not deal with these questions here. Moreover, with a probabilistic transformation (if possible), it is possible to increase the local activity of a one-mode state with a free operation starting from two copies of the one-mode state. For example, starting from two copies of a single-photon state $\ket{1}$, we can send them into a $50:50$ beam splitter (free operation) and postselect the second mode onto the vacuum state $\ket{0}$, which is a free state. The outcome is the Fock state $\ket{2}$, which has local activity $g(2+1/2)$ greater than that of the initial states $g(1+1/2)$. However, the existence of free Kraus operators that will allow postselection over free states is an issue we leave open here.

\subsection{For local Gaussian extractable work}
Just like in the case of the relative entropy of activity, the local Gaussian extractable work assisted with linear interferometers cannot be distilled from two copies of a single-mode state. In order to see this, recall that the local covariance matrices after processing the two copies of an initial state through a beam splitter and phase shifters are given by Eq. \eqref{eq:loc-cov-st}. Now, we have 
\begin{align*}
\mathcal{W}_{l}(\pmb\gamma'_1) &= \mathrm{Tr}\left[ \pmb\gamma'_1 \right] -\mathrm{Str}\left[\pmb\gamma'_1\right]\\
&\leq \mathrm{Tr}\left[ \pmb\gamma \right] -\mathrm{Str}\left[\pmb\gamma\right]=\mathcal{W}_{l}(\pmb\gamma).
\end{align*} 
The inequality in the equation above comes from the use of Eq.\eqref{eq:conc-str}. This shows that the two copies of single-mode states are useless for deterministic work extraction. However, we again have a simple example of a two-mode state, from two copies of which we can distill some work. Consider a two-mode state with covariance matrix $\pmb\Gamma=\pmb\gamma_A\oplus\pmb\gamma_B$ such that $\mathcal{W}_{l}(\pmb\gamma_A) > \mathcal{W}_{l}(\pmb\gamma_B)$. The two copies of this state will be denoted by the covariance matrix $\pmb\Gamma^{\oplus 2}=(\pmb\gamma_A\oplus\pmb\gamma_B)\oplus(\pmb\gamma_A\oplus\pmb\gamma_B)$. Now, we can apply a swap between auxiliary modes two and three (using a beam splitter) to get a covariance matrix $\tilde{\pmb\Gamma}^{\oplus 2}=(\pmb\gamma_A\oplus\pmb\gamma_A)\oplus(\pmb\gamma_B\oplus\pmb\gamma_B)$. One can see that the first two modes provide an amount of work that is greater than the work that might be obtained from the initial state $\pmb\Gamma$. Again, the tasks of designing optimal deterministic conversion protocols and probabilistic protocols are left open and will be studied separately.

\section{Conclusion}
\label{sec:conclusion}

In this work, we have explored a possible multipartite extension of a resource theory for quantum thermodynamics, where each party has access to a local heat bath possibly with a different temperature. Specifically, we have developed a resource theory of local Gaussian work extraction assisted with linear optics (linear interferometers). In doing so, we first introduced a set $\fs$ of free states, namely, products of thermal states (possibly at different temperatures) acted upon by an arbitrary linear interferometer. The states which are not free are then deemed to have a resource called local activity.  We introduced the relative entropy of local activity $A_l$ as a resource monotone and calculated it explicitly for various exemplary cases. In particular, we obtained a closed-form formula for the relative entropy of local activity for arbitrary two-mode Gaussian  states. We also noted that the free states in $\fs$ are locally thermal; hence they are inactive for local Gaussian work extraction (however, there exist nonfree states that are locally thermal too, such as the two-mode squeezed vacuum state). 
Then, we turned to a resource theory of local Gaussian work extraction assisted with linear optics, based on covariance matrices.
The local Gaussian extractable work assisted with linear optics $W_l$ was shown to be a resource monotone, whose properties were discussed in detail.
Finally, we showed examples (both for local activity and local Gaussian extractable work) where we start with two copies of a resource state and apply free operations to get a single state with more resources. These examples are reminiscent of resource distillation protocols in quantum information theory. 

Our results generalize and further advance the resource theory of quantum thermodynamics in a multipartite setting, and could be extended in several directions. Here, we have successfully characterized the set of states $\IW$ whose covariance matrices are such that no local Gaussian work can be extracted, assisted with linear optics. However, it would be very interesting to characterize the set of all states from which no local work extraction is possible, \textit{i.e.}, characterization of the set of quantum states which are locally passive. This is a special case of the so-called quantum marginal problem \cite{Higuchi2003b, Higuchi2003, Bravyi2004, Rudolph2004, Parthasarathy2005, Han2005, Klyachko2006, Eisert2008} which aims at finding global quantum states such that the marginals are fixed and are given. Therefore, the investigation of this problem will shed light on the different possible versions of the quantum marginal problem and solutions thereof. Furthermore, we note that in the rapidly growing field of quantum thermodynamics, mainly in the resource theory of quantum thermodynamics, the discussions on postselection onto free states are  rather scarce compared to corresponding theories of entanglement and coherence. In this work, we envisaged the possibility of postselection onto thermal states; however, this seemed an involved problem and requires further dedicated exposition on its own. Finally, this work is highly relevant to all optical heat engines such as in Ref. \cite{Dechant2014} and we hope that it will lead to further developments in this area in the new setting of resource theories.


\begin{acknowledgements} 
We are grateful to Gerardo Adesso, Ludovico Lami, and Andreas Winter for useful comments on this work. This work was supported by the F.R.S.-FNRS Foundation under Project No. T.0224.18. U.S. acknowledges a fellowship from Universit\'e libre de Bruxelles, M.G.J. a fellowship from the Wiener-Anspach Foundation, and Z.V.H. a fellowship from the FRIA Foundation.
\end{acknowledgements}

\appendix

\section{Covariance matrix for $N$ mode free states}
\label{append:free-states}

In this section, we will show by mathematical induction that the form of the covariance matrix of an $N$-mode free state is given by
\begin{align}
\label{free-form}
 \pmb\Gamma(N)
 =\begin{pmatrix}
 a_{11}\mathbb{I}_2 &R_{12} &\cdots & R_{1N}  \\
 R_{12}^T &a_{22} \mathbb{I}_2&\cdots  & R_{2N}\\
 \vdots &\vdots &\ddots & \vdots \\
 R_{1N}^T  & R_{2N}^T&\cdots  & a_{NN} \mathbb{I}_2
 \end{pmatrix},
 \end{align}
where $R_{ij}$ are $2\times 2$ matrices such that $R_{ij}R_{ij}^T\propto \mathbb{I}_2$ and $R_{ij}\pmb{\omega} R_{ij}^T\propto \pmb{\omega}$. Moreover, all the constants and proportionality constants are such that the covariance matrix is physical.
First, we show that the covariance matrix of any two-mode free state can always be written as 
\begin{align}
\label{free-cov-2}
 \pmb\Gamma(2)
 =\begin{pmatrix}
 a_{11}\mathbb{I}_2 &R_{12}  \\
 R_{12}^T  & a_{22} \mathbb{I}_2
 \end{pmatrix},
 \end{align}
 where $R_{12}$ is such that $R_{12}R_{12}^T\propto \mathbb{I}_2$ and $R_{12}\pmb{\omega} R_{12}^T\propto \pmb{\omega}$. By definition, an arbitrary two-mode free state has covariance matrix $\pmb\Gamma(2)=\mathcal{S} \left(b_{11}\mathbb{I}_2\oplus b_{22}\mathbb{I}_2\right) \mathcal{S}^T$, where $b_{ii}$ $(i=1,2)$ correspond to local temperatures of initial thermal states and $\mathcal{S}$ is an orthogonal symplectic transformation which is combination of a single beam splitter and at least three phase shifters. Thus, for two modes, $\mathcal{S}$ is given by
\begin{align*}
\mathcal{S} &=\begin{pmatrix}
R_1 & 0\\
0 & R_2
\end{pmatrix} 
\begin{pmatrix}
\cos\theta \mathbb{I}_2 & \sin\theta \mathbb{I}_2\\
-\sin\theta \mathbb{I}_2& \cos\theta \mathbb{I}_2
\end{pmatrix} \begin{pmatrix}
R_3 & 0\\
0 & R_4
\end{pmatrix} \\
&=
\begin{pmatrix}
\cos\theta R_1 R_3& \sin\theta R_1R_4\\
-\sin\theta R_2 R_3& \cos\theta R_2R_4
\end{pmatrix}.
\end{align*} 
Now, it is easy to see that $\pmb\Gamma(2)$ has a  similar form to Eq. \eqref{free-cov-2}.

Let us assume that any $N$-mode covariance matrix $\pmb\Gamma(N)$ of any free state  state can be written as Eq. \eqref{free-form}. Now, we bring another mode in a thermal state, and the $(N+1)$-mode free state (as adding ancillae in thermal states is freely allowed) is written as follows.
\begin{align}
 &\pmb\Gamma(N+1)\nonumber\\
 &=\begin{pmatrix}
 \begin{array}{c c c|c c}
 a_{11}\mathbb{I}_2  &\cdots &R_{1(N-1)} & R_{1N} & 0 \\
 \vdots  &\ddots &\vdots & \vdots & \vdots \\
 R_{1(N-1)}^T &\cdots &a_{(N-1)(N-1)}\mathbb{I}_2 & R_{(N-1)N} & 0\\ \hline
 R_{1N}^T  &\cdots & R_{(N-1)N}^T & a_{NN} \mathbb{I}_2 & 0\\
 0   &\cdots & 0 & 0&b \mathbb{I}_2
 \end{array}
 \end{pmatrix}.
 \end{align}
We next apply the linear interferometric transformation on this covariance matrix, which can be factorized into $N$ beam splitter and phase shifter transformations acting sequentially between the $N$th and $(N+1)$th, $(N-1)$th and $(N+1)$th, up to $1$st and $(N+1)$th modes. This provides the most general $N+1$ free state. The beam splitter and phase shifters on $N$th and $(N+1)$th modes correspond to following orthogonal symplectic matrix:  
 \begin{align}
 \mathcal{S}
 =\begin{pmatrix}
 \begin{array}{c c c|c c}
 \mathbb{I}_2 &\cdots &0 & 0 & 0 \\
 \vdots  &\ddots &\vdots & \vdots & \vdots \\
 0  &\cdots &\mathbb{I}_2 & 0 & 0\\ \hline
 0  &\cdots & 0 & \cos\theta T_1 & \sin\theta T_1T_4\\
 0   &\cdots & 0 & -\sin\theta T_2& \cos\theta T_2T_4
 \end{array}
 \end{pmatrix},
 \end{align}
 where $T_1$, $T_2$ and $T_4$ are rotation matrices. This yields

\begin{widetext}
 \begin{align*}
 &\mathcal{S} \pmb\Gamma(N+1)\mathcal{S}^T
 \\
 &=\begin{pmatrix}
 \begin{array}{c c c c|c c}
 a_{11}\mathbb{I}_2 &R_{12} &\cdots &R_{1(N-1)} & \cos\theta  R_{1N} T_1^T & -\sin\theta R_{1N} T_2^T \\
 R_{12}^T &a_{22} \mathbb{I}_2&\cdots &R_{2(N-1)} & \cos\theta  R_{2N} T_1^T & -\sin\theta R_{2N} T_2^T\\
 \vdots &\vdots &\ddots &\vdots & \vdots & \vdots \\
 R_{1(N-1)}^T &R_{2(N-1)}^T &\cdots &a_{(N-1)(N-1)}\mathbb{I}_2 & \cos\theta R_{(N-1)N}T_1^T & -\sin\theta R_{(N-1)N} T_2^T\\ \hline
 \cos\theta T_1 R_{1N}^T  & \cos\theta T_1 R_{2N}^T&\cdots & \cos\theta T_1 R_{(N-1)N}^T & (a_{NN}\cos^2\theta + b\sin^2\theta)\mathbb{I}_2 & (b-a_{NN})\sin\theta\cos\theta T_1 T_2^T\\
 -\sin\theta T_2R_{1N}^T  & -\sin\theta T_2R_{2N}^T &\cdots & -\sin\theta T_2R_{(N-1)N}^T & (b-a_{NN})\sin\theta\cos\theta T_2 T_1^T  &(a_{NN}\sin^2\theta + b\cos^2\theta)\mathbb{I}_2
 \end{array}
 \end{pmatrix}.
 \end{align*}
\end{widetext}
The above matrix is of the form given by Eq. \eqref{free-form}. Now, we can apply the beam splitter and phase shifters between $(N-1)$th and $(N+1)$th mode and so on. It can be seen by mere inspection that the application of the beam splitter and phase shifters always yields the covariance matrix compatible with Eq. \eqref{free-form}. Thus, after application of all beam splitters and phase shifters, we have
\begin{align}
 \pmb\Gamma(N+1)
 =\begin{pmatrix}
 a_{11}\mathbb{I}_2 &R_{12} &\cdots & R_{1(N+1)}  \\
 R_{12}^T &a_{22} \mathbb{I}_2&\cdots  & R_{2(N+1)}\\
 \vdots &\vdots &\ddots & \vdots \\
 R_{1(N+1)}^T  & R_{2(N+1)}^T&\cdots  & a_{(N+1)(N+1)} \mathbb{I}_2
 \end{pmatrix}.
 \end{align}
Therefore, from mathematical induction, we prove that the form of covariance matrices of free states is given by Eq. \eqref{free-form}.

\section{Convexity of the covariance matrices of the set of free states}
\label{append:non-convex}

We have already noted that the set of free states is not convex as the set of Gaussian states is not convex. However, we show here that the covariance matrices of the set of free states form a convex set. Let us first consider a very simple example of two thermal states corresponding to two different modes with covariance matrices $\Gamma_1=a\mathbb{I}$ and $\Gamma_2= b\mathbb{I}$. After the beam-splitter transformation $\mathcal{S}_{BS}(\eta)$ of transmittivity $\eta$ on both modes, we have
\begin{align*}
&\pmb\Gamma'=\mathcal{S}_{BS}(\eta)\Gamma_1\oplus \Gamma_2 \mathcal{S}_{BS}^T(\eta)\\
&=\left(\begin{array}{cc} (a\eta +b(1-\eta))\mathbb{I}_2 & (b-a)\sqrt{\eta(1-\eta)}\mathbb{I}_2\\  (b-a)\sqrt{\eta(1-\eta)}\mathbb{I}_2 &(a(1-\eta) +b\eta)\mathbb{I}_2\end{array}\right).
\end{align*}
Similarly, 
\begin{align*}
\pmb\Gamma''&=\frac{1}{2}\mathcal{S}_{BS}(\eta) \Gamma_1\oplus\Gamma_2 \mathcal{S}_{BS}^T(\eta) + \frac{1}{2} \mathcal{S}_{BS}(\tau) \Gamma_1\oplus \Gamma_2 \mathcal{S}_{BS}^T(\tau)\\
&=\left(\begin{array}{cc} \alpha\mathbb{I}_2 & \gamma\mathbb{I}_2\\  \gamma\mathbb{I}_2 &\beta\mathbb{I}_2\end{array}\right),
\end{align*}
where 
\begin{align*}
&\alpha=a\frac{(\eta+\tau)}{2} +b\left(1-\frac{(\eta+\tau)}{2}\right),\\
&\beta=a\left(1-\frac{(\eta+\tau)}{2}\right) +b\frac{(\eta+\tau)}{2},~\mathrm{and}\\
&\gamma=\frac{(b-a)}{2}\left(\sqrt{\eta(1-\eta)}+\sqrt{\tau(1-\tau)}\right).
\end{align*}
The above covariance matrix can be written as 
\begin{align*}
\pmb\Gamma''&=\mathcal{S}_{BS}(\mu) \left(\begin{array}{cc} c\mathbb{I}_2 & 0\\  0 & d\mathbb{I}_2\end{array}\right) \mathcal{S}_{BS}^T(\mu),
\end{align*}
where
\begin{align*}
&c=\frac{1}{2}\left[ (\alpha+\beta) +\sqrt{(\beta-\alpha)^2+4\gamma^2} \right],\\
&d=\frac{1}{2}\left[ (\alpha+\beta) -\sqrt{(\beta-\alpha)^2+4\gamma^2} \right],\\
&\mu=\cos^2{\theta}~\mathrm{and}~ \theta=\frac{1}{2}\arctan\left(\frac{2\gamma}{\beta-\alpha}\right).
\end{align*}
This shows that $\pmb\Gamma''$ is a free covariance matrix.
\begin{prop}
The free covariance matrices form a convex set.
\end{prop}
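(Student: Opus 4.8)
The plan is to bypass the block structure of Eq.~\eqref{eq:free-cov} altogether and instead use the scalar characterization of freeness furnished by property $P'_3$: a covariance matrix is free if and only if $\mathcal{W}_l(\pmb\Gamma)=0$, that is, if and only if its ordinary trace coincides with its symplectic trace, $\mathrm{Tr}[\pmb\Gamma]=\mathrm{Str}[\pmb\Gamma]$. Accordingly, I would let $\pmb\Gamma_{free}^{(1)}$ and $\pmb\Gamma_{free}^{(2)}$ be free and set $\pmb\Gamma=p\,\pmb\Gamma_{free}^{(1)}+(1-p)\,\pmb\Gamma_{free}^{(2)}$ for $0\le p\le 1$. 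Since the set of physical covariance matrices is cut out by the convex constraint $\pmb\Gamma+\mathfrak{i}\pmb\Omega/2\ge 0$, the mixture $\pmb\Gamma$ is automatically a valid covariance matrix, so it only remains to verify that $\mathrm{Tr}[\pmb\Gamma]=\mathrm{Str}[\pmb\Gamma]$.

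First I would record two elementary facts about the symplectic trace. By Eq.~\eqref{eq:bhatia} it is positively homogeneous, $\mathrm{Str}[\lambda C]=\lambda\,\mathrm{Str}[C]$ for $\lambda\ge 0$, and it is bounded above by the ordinary trace, $\mathrm{Str}[C]\le\mathrm{Tr}[C]$ (take $\mathcal{S}=\mathbb{I}$ in the minimization). Combining the superadditivity of Eq.~\eqref{eq:conc-str} with homogeneity, the freeness of the two endpoints, and the linearity of the trace then gives
\begin{align*}
\mathrm{Str}[\pmb\Gamma] &\ge p\,\mathrm{Str}[\pmb\Gamma_{free}^{(1)}]+(1-p)\,\mathrm{Str}[\pmb\Gamma_{free}^{(2)}]\\
&= p\,\mathrm{Tr}[\pmb\Gamma_{free}^{(1)}]+(1-p)\,\mathrm{Tr}[\pmb\Gamma_{free}^{(2)}]=\mathrm{Tr}[\pmb\Gamma].
\end{align*}
Together with $\mathrm{Str}[\pmb\Gamma]\le\mathrm{Tr}[\pmb\Gamma]$ this forces $\mathrm{Str}[\pmb\Gamma]=\mathrm{Tr}[\pmb\Gamma]$, i.e.\ $\mathcal{W}_l(\pmb\Gamma)=0$, so $\pmb\Gamma$ is free. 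This is precisely the sandwich already used to establish the convexity of $\mathcal{W}_l$ in property $P'_2$, and it extends verbatim to any finite convex mixture $\sum_j p_j\,\pmb\Gamma_{free}^{(j)}$.

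An alternative, purely structural route would track Eq.~\eqref{eq:free-cov} directly: the diagonal blocks $a_{ii}\mathbb{I}_2$ and off-diagonal blocks $R_{ij}$ of a free covariance matrix all lie in the commutative set $\{x\,\mathbb{I}_2+y\,\pmb\omega : x,y\in\mathbb{R}\}$, which is closed under linear combinations; hence the mixture inherits the same block form, and since for such matrices the symplectic eigenvalues coincide with the ordinary eigenvalues, the result is again free. The main obstacle on this route is that the conditions stated after Eq.~\eqref{eq:free-cov}, namely $R_{ij}R_{ij}^T\propto\mathbb{I}_2$ and $R_{ij}\pmb\omega R_{ij}^T\propto\pmb\omega$, are not by themselves preserved under mixing: the second is vacuous, since every $2\times 2$ matrix obeys $M\pmb\omega M^T=\det(M)\,\pmb\omega$, while the first also admits reflections, whose convex combinations need not satisfy $R R^T\propto\mathbb{I}_2$. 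One must therefore first upgrade these conditions to the sharper statement that each $R_{ij}$ is a nonnegative scalar multiple of a planar rotation (as is manifest from the beam-splitter/phase-shifter construction in Appendix~\ref{append:free-states}), after which linear combinations stay in $\{x\,\mathbb{I}_2+y\,\pmb\omega\}$. Because the symplectic-trace argument sidesteps this case analysis entirely, I would present it as the main proof and relegate the structural picture to a remark.
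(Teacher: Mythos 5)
Your main argument is correct, and it takes a genuinely different route from the paper's own proof of this proposition in Appendix \ref{append:non-convex}. There, the proof is structural: two free covariance matrices are written in the block form of Eq.~\eqref{eq:free-cov}, and the mixed off-diagonal block $\tilde{R}=p_1R_1+p_2R_2$ is checked to still obey $\tilde{R}\tilde{R}^T\propto\mathbb{I}_2$ and $\tilde{R}\pmb{\omega}\tilde{R}^T\propto\pmb{\omega}$. You instead reduce freeness to the scalar identity $\mathrm{Tr}[\pmb\Gamma]=\mathrm{Str}[\pmb\Gamma]$ via property $P'_3$ and close the sandwich with superadditivity \eqref{eq:conc-str}, positive homogeneity, and the bound $\mathrm{Str}\leq\mathrm{Tr}$ from Eq.~\eqref{eq:bhatia}; this is non-circular, since $P'_3$ rests only on the Williamson theorem, the Bloch--Messiah decomposition, and Eq.~\eqref{eq:bhatia}, none of which invoke convexity, and indeed the paper itself gestures at exactly your argument inside property $P'_2$ (``\ldots so the set of free covariance matrices is convex''), even though its official proof of the proposition is the structural one. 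The trade-off: your route covers any number of modes and any finite mixture in one stroke, with no induction or block bookkeeping, and carries the operational meaning that mixing cannot create local Gaussian extractable work; the appendix route is self-contained within the Sec.~\ref{sec:basic} machinery (so the convexity remark there needs no forward reference to Sec.~\ref{sec:extra-work}) and explicitly exhibits the block form of the mixture. Your side remark on the structural route is also well taken, and in fact sharper than the appendix itself: the closure fact invoked there fails for scaled reflections (take $R_1=\mathbb{I}_2$ and $R_2=\mathrm{diag}(1,-1)$, both satisfying $R_iR_i^T=\mathbb{I}_2$, yet $R_1R_2^T+R_2R_1^T=2\,\mathrm{diag}(1,-1)\not\propto\mathbb{I}_2$; note that the two-mode squeezed vacuum state has precisely such a reflection block and is not free), so the appendix implicitly uses the stronger property, manifest from the construction in Appendix \ref{append:free-states}, that the blocks of free covariance matrices are scaled rotations, i.e., lie in the linear span of $\mathbb{I}_2$ and $\pmb{\omega}$, which is what actually makes them closed under mixing.
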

\begin{proof}
Let us consider a set of free covariance matrices $\{\pmb\Gamma_{free}^{(j)}\}_{j=1}^M$. Then, to prove the proposition we need to show that for any probability distribution $\{p_j\}_{j=1}^M$, $\sum_j p_j \pmb\Gamma_{free}^{(j)}$ is also a free covariance matrix. Further, it suffices to prove this for convex combination of two free covariance matrices of two modes only. Then using Eq. \eqref{eq:free-cov}, consider two two-mode free covariance matrices
\begin{align*}
\pmb\Gamma_{free}^{(1)}
 =\begin{pmatrix}
 a_{1}\mathbb{I}_2 &R_1  \\
 R_1^T  & a_{2} \mathbb{I}_2
 \end{pmatrix},~\mathrm{and}~\pmb\Gamma_{free}^{(2)}
 =\begin{pmatrix}
 b_{1}\mathbb{I}_2 &R_2  \\
 R_2^T  & b_{2} \mathbb{I}_2
 \end{pmatrix}.
\end{align*}
Now,
\begin{align*}
p_1\pmb\Gamma_{free}^{(1)}+p_2\pmb\Gamma_{free}^{(2)}
 =\begin{pmatrix}
 (p_1 a_{1}+p_2 b_{1})\mathbb{I}_2 & \tilde{R}  \\
 \tilde{R}^T & p_1(a_{2}+ p_2b_{2})\mathbb{I}_2
 \end{pmatrix},
\end{align*}
where $\tilde{R}=p_1 R_1 + p_2 R_2 $. For the above to be a free covariance matrix, we need to show that $\tilde{R}\tilde{R}^T\propto \mathbb{I}_2$ and $\tilde{R}\pmb\omega \tilde{R}^T\propto \pmb\omega$. We see that
\begin{align*}
\tilde{R}\tilde{R}^T &\propto p_1^2 \mathbb{I}_2  + p_1p_2(R_1R_2^T+R_2R_1^T)+ p_2^2 \mathbb{I}_2\propto \mathbb{I}_2.
\end{align*}
The last proportionality follows from the fact that for two real matrices $\{R_i\}_{i=1}^2$ such that $R_i R_i^T\propto \mathbb{I}_2$, $R_1R_2^T+R_2R_1^T$ is also proportional to identity. Further,
\begin{align*}
\tilde{R}\pmb\omega\tilde{R}^T &\propto p_1^2 \pmb\omega + p_2^2 \pmb\omega + p_1p_2(R_1\pmb\omega R_2^T+R_2 \pmb\omega R_1^T)\propto \pmb\omega.
\end{align*}
The last proportionality again follows from the fact that for two real matrices $\{R_i\}_{i=1}^2$ such that $R_i \pmb\omega R_i^T\propto \pmb\omega$ and $R_i R_i^T\propto \mathbb{I}_2$ then $R_1 \pmb\omega R_2^T+R_2 \pmb\omega R_1^T$ is also proportional $\pmb\omega$. This completes the proof of the proposition.
\end{proof}

\begin{rem}
The convexity of the set of free covariance matrices seems to be a general property of Gaussian free states under certain assumptions \cite{Lami2018}, which we choose not to discuss in the present work.
\end{rem}

\section{Postselecting onto free states}
\label{append:postselection}
In order to prove that postselection onto a free state is a free operation, we use the phase-space picture. We start by considering the case where one postselects only one mode, say, the last one. Suppose that we are given a free state $\pmb\Gamma(N)$ as in Eq. \eqref{eq:free-cov} which can be rewritten as
 \begin{align*}
 \pmb\Gamma(N)=\begin{pmatrix}
 \pmb\Gamma_{AA} & \pmb\Gamma_{AB}\\
 \pmb\Gamma_{AB}^T & \pmb\Gamma_{BB} 
 \end{pmatrix}.
 \end{align*}
 Here $\pmb\Gamma_{AA}$ is a $2(N-1)\times 2(N-1)$ covariance matrix, $\pmb\Gamma_{BB}$ is $2\times 2$ covariance matrix which is proportional to identity, and $\pmb\Gamma_{AB}$ is a $2(N-1)\times 2$ matrix. The structure of these matrices can be read from Eq. \eqref{eq:free-cov}. Now, we want to postselect the $N$th mode in the covariance matrix $\pmb\gamma = (\bar{n}'+1/2)\mathbb{I}_2$. Such a postselection results in the remaining $(N-1)$ modes covariance matrix, which is given by
 \begin{align*}
  \tilde{\pmb\Gamma}(N-1) &=  \pmb\Gamma_{AA} - \pmb\Gamma_{AB} (\pmb\Gamma_{BB} + \pmb\gamma)^{-1}\pmb\Gamma_{AB}^T\nonumber\\
  &= \pmb\Gamma''(N)/(\pmb\Gamma_{BB} + \pmb\gamma),
 \end{align*}
 where $\pmb\Gamma''(N)/(\pmb\Gamma_{BB} + \pmb\gamma)$ denotes the Schur complement of $(\pmb\Gamma_{BB} + \pmb\gamma)$ in $\pmb\Gamma''(N)$ and 
 \begin{align*}
 \pmb\Gamma''(N)=\begin{pmatrix}
 \pmb\Gamma_{AA} & \pmb\Gamma_{AB}\\
 \pmb\Gamma_{AB}^T & \pmb\Gamma_{BB} +\pmb\gamma
 \end{pmatrix}.
 \end{align*}
 Now, using Cauchy's interlacing theorem for the eigenvalues of Schur complements \cite{Zhang2005} and eigenvalues of symmetric matrices, we have 
 \begin{align*}
 \lambda_{\mathrm{min}} \left(  \pmb\Gamma''(N)/(\pmb\Gamma_{BB} + \pmb\gamma) \right) &\geq \lambda_{\mathrm{min}} \left(  \pmb\Gamma''(N)\right)\\
 &=\lambda_{\mathrm{min}} \left(  \pmb\Gamma_{BB} + \pmb\gamma \right)\\
 &\geq \lambda_{\mathrm{min}} \left(  \pmb\Gamma_{BB} \right) + \lambda_{\mathrm{min}} \left(  \pmb\gamma \right)\\
 &\geq 1.
 \end{align*}
 In the above, we have used the fact that $\lambda_{\mathrm{min}}(A+B)\geq \lambda_{\mathrm{min}}(A)+ \lambda_{\mathrm{min}}(B)$ which follows from the Courant-Fischer-Weyl min-max theorem \cite{Zhang2005}. Thus, $\lambda_{\mathrm{min}} (\tilde{\pmb\Gamma}(N-1)) \geq 1$. Now, notice that $\pmb\Gamma_{BB} + \pmb\gamma$ is proportional to $\mathbb{I}_2$ and let $\pmb\Gamma_{BB} + \pmb\gamma = x \mathbb{I}_2$ with $x\geq 1$. Therefore, 
 \begin{align*}
  \tilde{\pmb\Gamma}(N-1) &=  \pmb\Gamma_{AA} - x^{-1}\pmb\Gamma_{AB} \pmb\Gamma_{AB}^T.
 \end{align*}
 Notice from Eq. \eqref{eq:free-cov} that
  \begin{align*}
  \pmb\Gamma_{AB} =\begin{pmatrix}
  R_{1N}\\
  \vdots\\
  R_{(N-1)N}
  \end{pmatrix}.
 \end{align*}
 Therefore,
  \begin{align*}
  \pmb\Gamma_{AB} \pmb\Gamma_{AB}^T &=\begin{pmatrix}
  R_{1N} \\
  \vdots\\
  R_{(N-1)N}
  \end{pmatrix} \begin{pmatrix}
  R_{1N}^T &\cdots & R_{(N-1)N}^T 
  \end{pmatrix}\nonumber\\
   &=\begin{pmatrix}
 c_{11}\mathbb{I}_2 &R'_{12} &\cdots & R'_{1(N-1)} \\
 R_{12}^{'T} &c_{22} \mathbb{I}_2&\cdots & R'_{2(N-1)} \\
 \vdots &\vdots &\ddots &\vdots\\
 R_{1(N-1)}^{'T}  & R^{'T}_{2(N-1)} &\cdots &  c_{(N-1)(N-1)} \mathbb{I}_2
 \end{pmatrix},
 \end{align*}
 where, $c_{ii} = R_{iN} R_{iN}^T$ and $R'_{ij}=R_{iN} R_{jN}^T$ for $i<j$. Also, $R'_{ij}R_{ij}^{'T}\propto\mathbb{I}_2$ and $R'_{ij}\pmb{\omega}R_{ij}^{'T}\propto \pmb{\omega}$. Thus, $ \tilde{\pmb\Gamma}(N-1)$ has the following form:
 \begin{align*}
  \tilde{\pmb\Gamma}(N-1) &=\begin{pmatrix}
 d_{11}\mathbb{I}_2 &R''_{12} &\cdots & R''_{1(N-1)} \\
 R_{12}^{''T} &d_{22} \mathbb{I}_2&\cdots & R''_{2(N-1)} \\
 \vdots &\vdots &\ddots &\vdots\\
 R_{1(N-1)}^{''T}  & R^{''T}_{2(N-1)} &\cdots &  d_{(N-1)(N-1)} \mathbb{I}_2
 \end{pmatrix},
  \end{align*}
  where $R''_{ij}R_{ij}^{''T}\propto\mathbb{I}_2$ and $R''_{ij}\pmb{\omega}R_{ij}^{''T}\propto\pmb{\omega}$. We have also used the fact that given two matrices $T_1$ and $T_2$ such that $T_iT_i^T\propto \mathbb{I}_2$ and $T_i\pmb\omega T_i^T\propto \pmb\omega$ $(i=1,2)$, we have $T'=(T_1-T_2)$ such that $T' T^{'T}\propto \mathbb{I}_2$ and $T'\pmb\omega T^{'T}\propto \pmb\omega$. The above matrix $\tilde{\pmb\Gamma}(N-1)$ has a similar form to Eq. \eqref{eq:free-cov}. Thus, we have proved that postselection onto a single-mode free state leaves the remaining state free (see Fig.~\ref{fig:postselecting-free-state}). Similar arguments can be presented to show that postselection onto any free state leaves the remaining state free and, therefore, postselection onto a free state is a free operation.

\begin{figure}
\centering
\includegraphics[width=55 mm]{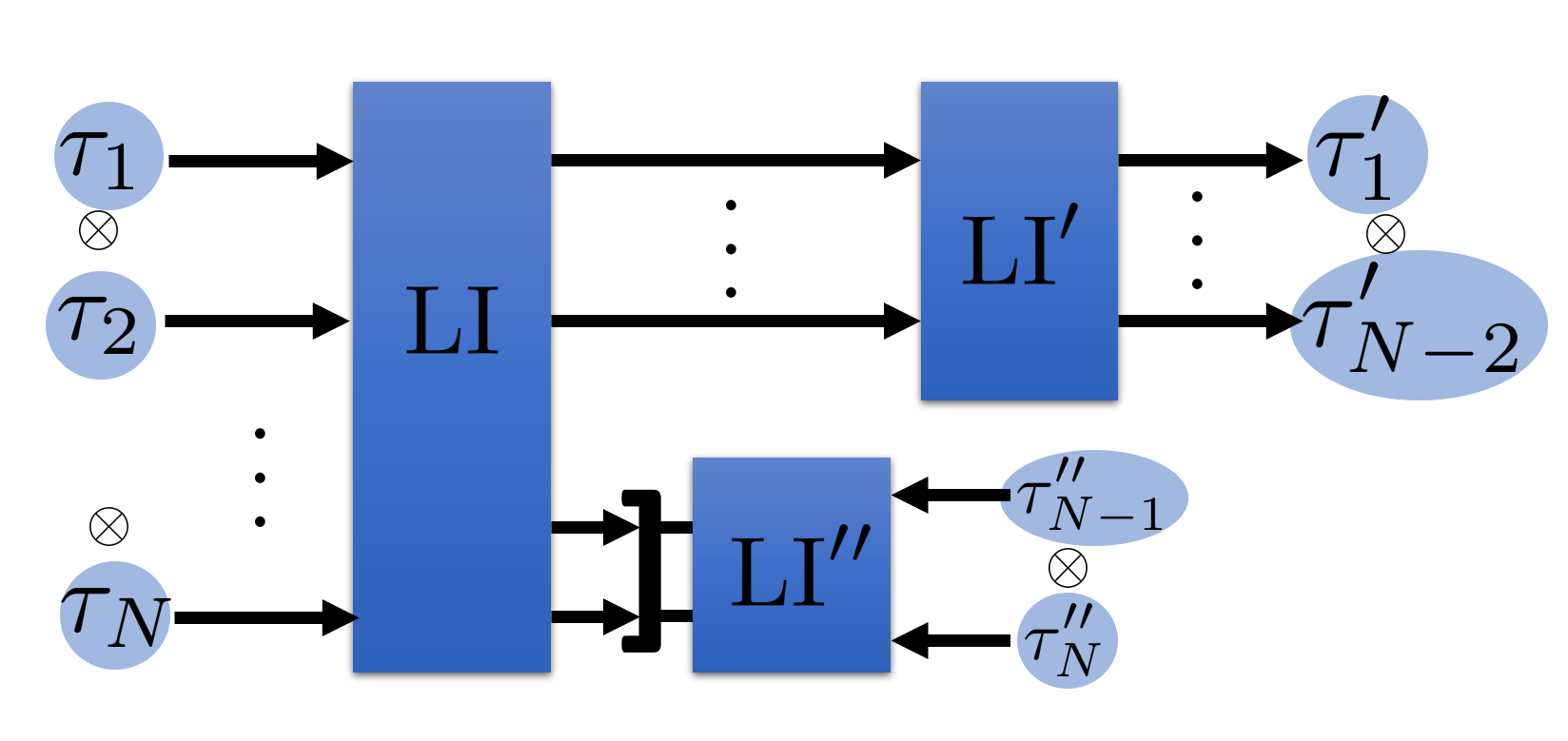}
\caption{The figure shows that the postselection of the last two modes of an $N$-mode free state $\rho^f_{1\cdots N}$ onto a free state results in another free state.}
\label{fig:postselecting-free-state}
\end{figure}

\begin{prop}
For a free channel as given in Eq. \eqref{eq:free-channel} such that for all its Kraus operators $K_i$, $K_i\sigma K_i^\dagger$ is a free state if $\sigma\in\fs$, then
\begin{align*}
\act\left(\rho\right) &\geq \sum_i p_i \act\left(\rho_i\right),
\end{align*}
where $\rho_i=\left(K_i\rho K_i^\dagger\right)/p_i$ and $p_i=\mathrm{Tr}\left(K_i\rho K_i^\dagger\right)$. This property implies monotonicity of the relative entropy of local activity on an average under selective free operations.
\end{prop}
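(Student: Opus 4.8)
The plan is to use the standard strong-monotonicity argument for the relative entropy, combined with the hypothesis that each Kraus operator maps free states to free states. First I would let $\sigma^{*}\in\fs$ be the free state achieving the minimum, so that $\act(\rho)=\infdiv*{\rho}{\sigma^{*}}$; if this quantity is infinite the inequality is trivial, so I may assume $\mathrm{supp}(\rho)\subseteq\mathrm{supp}(\sigma^{*})$. The central device is the \emph{measure-and-flag} channel
\begin{align*}
\Lambda(\cdot)=\sum_i K_i(\cdot)K_i^{\dagger}\otimes\ket{i}\bra{i},
\end{align*}
which is completely positive and trace preserving because $\sum_i K_i^{\dagger}K_i=\mathbb{I}$. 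Applying it to both arguments and invoking the data-processing (contractivity) inequality for the relative entropy gives $\infdiv*{\rho}{\sigma^{*}}\geq\infdiv*{\Lambda(\rho)}{\Lambda(\sigma^{*})}$.

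Next I would evaluate the right-hand side explicitly. Writing $q_i=\mathrm{Tr}(K_i\sigma^{*}K_i^{\dagger})$ and $\sigma_i=K_i\sigma^{*}K_i^{\dagger}/q_i$, the outputs $\Lambda(\rho)=\sum_i p_i\,\rho_i\otimes\ket{i}\bra{i}$ and $\Lambda(\sigma^{*})=\sum_i q_i\,\sigma_i\otimes\ket{i}\bra{i}$ are block-diagonal and mutually orthogonal across the flag register. Computing $\mathrm{Tr}[\Lambda(\rho)(\ln\Lambda(\rho)-\ln\Lambda(\sigma^{*}))]$ then separates into a quantum and a classical contribution,
\begin{align*}
\infdiv*{\Lambda(\rho)}{\Lambda(\sigma^{*})}=\sum_i p_i\,\infdiv*{\rho_i}{\sigma_i}+\sum_i p_i\ln\frac{p_i}{q_i}.
\end{align*}
The second term is the classical Kullback--Leibler divergence, which is nonnegative by Gibbs' inequality, so it may simply be dropped to obtain $\act(\rho)\geq\sum_i p_i\,\infdiv*{\rho_i}{\sigma_i}$.

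Finally I would invoke the hypothesis of the proposition: since $\sigma^{*}\in\fs$ and each $K_i$ sends free states to free states, every $\sigma_i$ is itself a free state. Hence each term obeys $\infdiv*{\rho_i}{\sigma_i}\geq\min_{\sigma\in\fs}\infdiv*{\rho_i}{\sigma}=\act(\rho_i)$, and taking the $p_i$-weighted sum yields the claimed bound $\act(\rho)\geq\sum_i p_i\,\act(\rho_i)$. The main obstacle is not the relative-entropy manipulation, which is routine, but ensuring that the $\sigma_i$ are genuine (normalized) free states: this is precisely where the extra condition on the Kraus operators enters, and it is what distinguishes selective free operations from the deterministic free channel $\Phi_f$, whose individual Kraus operators need not preserve $\fs$ (as the pure-loss example already shows). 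A minor technical point to verify is that $q_i=0$ forces $p_i=0$ under the support condition, so that every $\sigma_i$ and every term of the classical divergence is well defined.
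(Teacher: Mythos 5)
Your proof is correct and follows essentially the same route as the paper: both arguments rest on the strong-monotonicity inequality $\infdiv*{\rho}{\sigma^*}\geq\sum_i p_i\,\infdiv*{\rho_i}{K_i\sigma^* K_i^\dagger/q_i}$, followed by the observation that the hypothesis on the Kraus operators makes each normalized post-measurement state $K_i\sigma^* K_i^\dagger/q_i$ an element of $\fs$, so that each term is bounded below by $\act(\rho_i)$. The only difference is that the paper cites this inequality from Vedral (1998) as a black box, whereas you re-derive it self-containedly via the measure-and-flag channel and the data-processing inequality (which is indeed the standard proof of that result), and you additionally take care of the $q_i=0$ edge case that the paper leaves implicit.
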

\begin{proof}
The proof follows from the property of the relative entropy \cite{Vedral1998}, which states that
\begin{align}
\label{eq:prop-rel}
\infdiv*{\rho}{\sigma}\geq \sum_{i}p_i \infdiv*{\rho_i}{\frac{K_i\sigma K_i^\dagger}{q_i}},
\end{align}
where $q_i=\mathrm{Tr}\left(K_i\sigma K_i^\dagger\right)$. From Eq. \eqref{eq:prop-rel}, we have
\begin{align*}
\act\left(\rho\right)&= \infdiv*{\rho}{\sigma^*}\\
&\geq \sum_{i}p_i \infdiv*{\rho_i}{\frac{K_i\sigma^* K_i^\dagger}{q_i}}\\
&= \sum_{i}p_i \infdiv*{\rho_i}{\tilde{\sigma}_i}\\
&\geq \sum_{i}p_i \min_{\tilde{\sigma}_i\in\fs}\infdiv*{\rho_i}{\tilde{\sigma}_i}= \sum_{i}p_i \act\left(\rho_i\right).
\end{align*}
This completes the proof of the proposition.
\end{proof}
Now, we address the question of the existence of free channels such that all of their Kraus operators are free; \textit{i.e.}, $K_i\sigma K_i^\dagger$ is a free state if $\sigma\in\fs$ for all $i$. We show that an arbitrary single-mode free channel (Eq. \eqref{eq:free-channel}) does not admit a Kraus decomposition such that all its Kraus operators are free. To this end, consider a single-mode free channel which, without loss of generality, can be defined as follows:
\begin{align}
\label{eq:thermal-loss-channel}
\Phi_{LT}(\rho^S) = \mathrm{Tr}_{A}\left[ U_{SA}^{BS} \left(\rho^S\otimes\tau_{th}^{A}\right) U_{SA}^{BS\dagger}\right],
\end{align}
where $U_{SA}^{BS}$ is a beam-splitter unitary and 
\begin{align*}
\tau_{th}^{A}&=\sum_{n=0}^\infty p_n\ket{n}\bra{n}=\sum_{n=0}^\infty \frac{\bar{n}_{\tau}^n}{(\bar{n}_{\tau}+1)^{(n+1)}}\ket{n}\bra{n}\\
&=(1-x)\sum_{n=0}^\infty x^n\ket{n}\bra{n},
\end{align*}
where $x=\bar{n}_{\tau}/(\bar{n}_{\tau}+1)$ and $\bar{n}_\tau$ is the average number of photons in state $\tau_{th}$. Now,
\begin{align*}
\Phi_{LT}(\rho^S) &= \sum_{m,n=0}^{\infty} \left(\sqrt{p_n}\bra{m} U_{SA}^{BS}\ket{n}\right) \rho^S\left(\sqrt{p_n}\bra{n} U_{SA}^{BS\dagger}\ket{m}\right)\\
&= \sum_{m,n=0}^{\infty} K_{mn} \rho^S K_{mn}^{\dagger},
\end{align*}
where $K_{mn}= \sqrt{p_n}\bra{m} U_{SA}^{BS}\ket{n}$. The expression for these Kraus operators can be found readily using Ref. \cite{Ivan2011}. In particular, $K_{mn} = \sum_{m_1,n_1=0}^{\infty}\sqrt{p_n}\bra{m_1,m} U_{SA}^{BS}\ket{n_1,n}\ket{m_1}\bra{n_1}$ and 
\begin{align*}
&\bra{m_1,m} U_{SA}^{BS}\ket{n_1,n} \\&=\sqrt{\frac{m_1!\, m!}{n_1!\, n!}}\sum_{s=0}^{n_1}\sum_{t=0}^{n}\binom{n_1}{s} \binom{n}{t} (-1)^{n-t}\\
&\times \eta^{n_1-s+t}(\sqrt{1-\eta^2})^{s+n-t}\delta_{m,s+t}\delta_{m_1,n_1+n-s-t},
\end{align*}
where $\eta$ is the transmittance of the beam splitter. Now, let us consider the Kraus operator $K_{00}$; this is given by
\begin{align*}
K_{00}
&= \sum_{m_1=0}^{\infty} \sqrt{p_0}\, \eta^{m_1}\ket{m_1}\bra{m_1}.
\end{align*}
Let us consider that $\rho^S=(1-y)\sum_{n=0}^\infty y^n\ket{n}\bra{n}$ is a thermal state with $y=\frac{\bar{n}_\rho}{\bar{n}_\rho+1}$. Then,
\begin{align*}
\rho^S_{00}\equiv\frac{K_{00}\rho^SK_{00}^\dagger}{\mathrm{Tr}[K_{00}\rho^SK_{00}^\dagger]}
&= (1-y\eta^2) \sum_{n=0}^{\infty} {(y \eta^{2})}^n\ket{n}\bra{n}\\
&= (1-z)\sum_{n=0}^{\infty} {z}^n\ket{n}\bra{n},
\end{align*}
where $z=y\eta^2$. Therefore, $\rho^S_{00}$ is indeed a thermal state. However, $K_{00}$ is the only Kraus operator that maps a thermal state to another thermal state. Let us consider another Kraus operator $K_{10}$, which is given by
\begin{align*}
K_{10}
&= \sum_{m_1=0}^{\infty} \sqrt{p_0}\,\sqrt{m_1+1} \eta^{m_1} \sqrt{1-\eta^2}\ket{m_1}\bra{m_1+1}.
\end{align*}
Now,
\begin{align*}
\rho^S_{10}&\equiv\frac{K_{10}\rho^SK_{10}^\dagger}{\mathrm{Tr}[K_{10}\rho^SK_{10}^\dagger]}\\
&= \frac{\sum_{n=0}^{\infty}(n+1)\eta^{2n}y^{n+1}\ket{n}\bra{n}}{\sum_{n=0}^{\infty}(n+1)\eta^{2n}y^{n+1}}\\
&= (1-y\eta^2)\sum_{n=0}^{\infty}(n+1)(y\eta^2)^n\ket{n}\bra{n}
\end{align*}
This implies that $\rho^S_{10}$ is not a thermal state. Therefore, for the one-mode case, all the Kraus operators corresponding to the  most general free channel are not free.

\section{Relative entropy of local activity for two modes}
\label{append:mono-two}
In this section, we calculate explicitly the relative entropy of local activity for any arbitrary two-mode Gaussian state. Consider a Gaussian state $\rho_1({\bf d},\pmb\Gamma_1)$ of two modes with covariance matrix
\begin{align*}
\pmb\Gamma_1=\begin{pmatrix}
A & C\\
C^T& B
\end{pmatrix},
\end{align*}
and the displacement vector ${\bf d}=\begin{pmatrix}
d_1&d_2&d_3&d_4
\end{pmatrix}^T$.
Consider a generic free state $\rho_2({\bf 0},\pmb\Gamma_2) = U^{\mathrm{PG}}(\tau_1\otimes\tau_2)U^{\mathrm{PG}\dagger}$ such that $\pmb\Gamma_2=\mathcal{S} \left(b_1\mathbb{I}_2\oplus b_2\mathbb{I}_2\right) \mathcal{S}^T$, where $\mathcal{S}$ is an orthogonal symplectic matrix and is given by
\begin{align*}
\mathcal{S} &=\begin{pmatrix}
R_1 & 0\\
0 & R_2
\end{pmatrix} 
\begin{pmatrix}
\cos\theta \mathbb{I}_2 & \sin\theta \mathbb{I}_2\\
-\sin\theta \mathbb{I}_2& \cos\theta \mathbb{I}_2
\end{pmatrix} \begin{pmatrix}
R_3 & 0\\
0 & R_4
\end{pmatrix} \\
&=
\begin{pmatrix}
\cos\theta R_1 R_3& \sin\theta R_1R_4\\
-\sin\theta R_2 R_3& \cos\theta R_2R_4
\end{pmatrix}.
\end{align*} 
Also, the symplectic form is given by
\begin{align*}
\pmb\Omega=\pmb\omega\oplus \pmb\omega~\mathrm{and}~\pmb\omega = \begin{pmatrix}
0 & 1\\
-1 & 0
\end{pmatrix}. 
\end{align*}

\begin{widetext}

Now, the relative entropy between $\rho_1({\bf d},\pmb\Gamma_1)$ and $\rho_2({\bf 0},\pmb\Gamma_2)$ is given by (\cite{Pirandola2017}; also see Sec. \ref{sec:prelims})
\begin{align*}
&\infdiv*{\rho_1({\bf d},\pmb\Gamma_1)}{\rho_2({\bf 0},\pmb\Gamma_2)} = -S(\rho_1) + \frac{1}{2} \ln \mathrm{det}\left( \pmb\Gamma_2 + i \frac{\Omega}{2}\right) +  \frac{1}{2}\mathrm{Tr}\left(\pmb\Gamma_1 {\bf G}_2\right) +\frac{1}{2}{\bf d}^T{\bf G}_2 {\bf d}:=\infdiv*{\pmb\Gamma_1}{\pmb\Gamma_2}, 
\end{align*} 
where 
\begin{align*}
{\bf G}_2 &= - \pmb\Omega \mathcal{S} \left(a_1\mathbb{I}_2\oplus a_2 \mathbb{I}_2\right) \mathcal{S}^T \pmb\Omega\\
&= -  \mathcal{S} \mathcal{S}^T\pmb\Omega \mathcal{S} \left(a_1\mathbb{I}_2\oplus a_2 \mathbb{I}_2\right) \mathcal{S}^T \pmb\Omega \mathcal{S} \mathcal{S}^T
= \mathcal{S} \left(a_1\mathbb{I}_2\oplus a_2 \mathbb{I}_2\right) \mathcal{S}^T.
\end{align*} 
Here $a_i=2\coth^{-1}(2 b_i)$, $(i=1,2)$. Let us first calculate ${\bf G}_2$ and $\pmb\Gamma_1 {\bf G}_2$: 

\begin{align*}
{\bf G}_2 &=
\begin{pmatrix}
\cos\theta R_1R_3 & \sin\theta R_1R_4\\
-\sin\theta R_2R_3 & \cos\theta R_2R_4
\end{pmatrix} 
\begin{pmatrix}
a_1\mathbb{I}_2 & 0\\
0 & a_2\mathbb{I}_2
\end{pmatrix}
 \begin{pmatrix}
\cos\theta R_3^TR_1^T & -\sin\theta R_3^T R_2^T\\
\sin\theta R_4^TR_1^T & \cos\theta R_4^TR_2^T
\end{pmatrix} 
\\
&= \begin{pmatrix}
\left(a_1 \cos^2\theta + a_2 \sin^2\theta\right) \mathbb{I}_2 & (a_2-a_1) \sin\theta \cos\theta R_{12}\\
(a_2-a_1) \sin\theta \cos\theta R_{12}^T& \left(a_1 \sin^2\theta + a_2  \cos^2\theta \right) \mathbb{I}_2 
\end{pmatrix}.
\end{align*} 
\begin{align*}
\pmb\Gamma_1 {\bf G}_2&= \begin{pmatrix}
A & C\\
 C^T&B
\end{pmatrix}\begin{pmatrix}
\left(a_1 \cos^2\theta + a_2 \sin^2\theta\right) \mathbb{I}_2 & (a_2-a_1) \sin\theta \cos\theta R_{12}\\
(a_2-a_1) \sin\theta \cos\theta  R_{12}^T& \left(a_1 \sin^2\theta + a_2  \cos^2\theta \right) \mathbb{I}_2 
\end{pmatrix}\\
&= \begin{pmatrix}
\left(a_1 \cos^2\theta + a_2 \sin^2\theta\right) A +(a_2-a_1) \sin\theta \cos\theta C R_{12}^T& (a_2-a_1) \sin\theta \cos\theta  AR_{12}+\left(a_1 \sin^2\theta + a_2  \cos^2\theta \right)C\\
\left(a_1 \cos^2\theta + a_2 \sin^2\theta\right)C^T+(a_2-a_1) \sin\theta \cos\theta BR_{12}^T&\left(a_1 \sin^2\theta + a_2  \cos^2\theta \right) B +(a_2-a_1) \sin\theta \cos\theta C^T R_{12}
\end{pmatrix},
\end{align*} 
where $R_{12}=R_1R_2^T$ and
\begin{align*}
C R_{12}^T&=\begin{pmatrix}
c_1 & c_2\\
c_3& c_4
\end{pmatrix} \begin{pmatrix}
\cos\delta\phi & -\sin\delta\phi\\
\sin\delta\phi& \cos\delta\phi
\end{pmatrix}=\begin{pmatrix}
c_1\cos\delta\phi +c_2 \sin\delta\phi& -c_1\sin\delta\phi+c_2\cos\delta\phi\\
c_3\cos\delta\phi+c_4\sin\delta\phi& -c_3\sin\delta\phi+c_4\cos\delta\phi
\end{pmatrix}.
\end{align*}
Thus, denoting $\mathrm{Tr}[C]=c$ and $c_2-c_3=\upsilon$, we have $\mathrm{Tr}[C R_{12}^T+C^TR_{12}]=2c\cos\delta\phi+2\upsilon\sin\delta\phi$. Therefore,
\begin{align*}
\mathrm{Tr}\left(\pmb\Gamma_1 {\bf G}_2\right)
&=\alpha \left(a_1 \cos^2\theta + a_2 \sin^2\theta\right)  + \beta\left(a_1 \sin^2\theta + a_2  \cos^2\theta \right) + (a_2-a_1) \sin\theta \cos\theta \left( 2c\cos\delta\phi+2\upsilon\sin\delta\phi. \right)\\
&=\frac{1}{2}\left[(a_1+a_2) (\alpha+\beta)+ (a_1-a_2)(\alpha-\beta)\cos2\theta +2 (a_2-a_1) \sin2\theta \left( c\cos\delta\phi+\upsilon\sin\delta\phi \right)\right],
\end{align*} 
where $\alpha= \mathrm{Tr}[A]$ and $\beta = \mathrm{Tr}[B]$. Let us now calculate $\mathrm{det}\left( \pmb\Gamma_2 + i \frac{\pmb\Omega}{2}\right)$:
\begin{align*}
\mathrm{det}\left( \pmb\Gamma_2 + i \frac{\pmb\Omega}{2}\right)  &= \mathrm{det}\left( \mathcal{S} \left(b_1\mathbb{I}_2\oplus b_2\mathbb{I}_2\right) \mathcal{S}^T + i \frac{\pmb\Omega}{2}\right)\nonumber\\
&= \mathrm{det} \left(\left(b_1\mathbb{I}_2\oplus b_2\mathbb{I}_2\right) + i \frac{\pmb\Omega}{2}\right)
= \left(b_1^2 -\frac{1}{4}\right)  \left(b_2^2 -\frac{1}{4}\right).
\end{align*}
Further, we have
\begin{align*}
{\bf d}^T{\bf G}_2 {\bf d}&=
\begin{pmatrix}
d_1&d_2&d_3&d_4
\end{pmatrix}\begin{pmatrix}
\left(a_1 \cos^2\theta + a_2 \sin^2\theta\right) \mathbb{I}_2 & (a_2-a_1) \sin\theta \cos\theta R_{12}\\
(a_2-a_1) \sin\theta \cos\theta R_{12}^T& \left(a_1 \sin^2\theta + a_2  \cos^2\theta \right) \mathbb{I}_2 
\end{pmatrix} \begin{pmatrix}
d_1\\
d_2\\
d_3\\
d_4
\end{pmatrix}\\
&=(a_1\cos^2\theta + a_2\sin^2\theta)(d_1^2+d_2^2) + (a_1\sin^2\theta+a_2\cos^2\theta)(d_3^2+d_4^2)\\
&~~~+(a_2-a_1)\sin2\theta\left((d_1d_3+d_2d_4)\cos\delta\phi +(d_1d_4-d_2d_3)\sin\delta\phi \right)\\
&=\frac{1}{2}\left[ (a_1+a_2)\tilde{d}_1 + (a_1-a_2)\tilde{d}_2\cos2\theta +2(a_2-a_1)\sin2\theta\left(\tilde{d}_3 \cos\delta\phi +\tilde{d}_4\sin\delta\phi \right)\right],
\end{align*}
where $\tilde{d}_1=(d_1^2+d_2^2+d_3^2+d_4^2)$, $\tilde{d}_2=(d_1^2+d_2^2-d_3^2-d_4^2)$, $\tilde{d}_3=(d_1d_3+d_2d_4)$, and $\tilde{d}_4=(d_1d_4-d_2d_3)$.
Thus, writing the relative entropy as a function of CMs to simplify the notation, we have
\begin{align*}
\infdiv*{\pmb\Gamma_1}{\pmb\Gamma_2}&=-S(\rho_1) + \frac{1}{2} \ln \left(b_1^2 -\frac{1}{4}\right) + \frac{1}{2}\ln \left(b_2^2 -\frac{1}{4}\right) +  \frac{1}{4}\left((a_1+a_2)(\alpha+\beta+\tilde{d}_1) + (a_1-a_2)(\alpha-\beta+\tilde{d}_2) \cos2\theta\right)\\
&~~~+  \frac{1}{2}(a_2-a_1) \sin2\theta \left( (c+\tilde{d}_3)\cos\delta\phi+(\upsilon+\tilde{d}_4)\sin\delta\phi\right)\\
&=-S(\rho_1) + \frac{1}{2} \ln \left(b_1^2 -\frac{1}{4}\right) + \frac{1}{2}\ln \left(b_2^2 -\frac{1}{4}\right) +  \frac{1}{4}\left((a_1+a_2)\tilde{\alpha} + (a_1-a_2)\tilde{\beta} \cos2\theta\right)\\
&~~~+  \frac{1}{2}(a_2-a_1) \sin2\theta \left( \tilde{c}\cos\delta\phi+\tilde{\upsilon}\sin\delta\phi\right),
\end{align*} 
where $\tilde{\alpha}=(\alpha+\beta+\tilde{d}_1)$, $\tilde{\beta}=(\alpha-\beta+\tilde{d}_2)$, $\tilde{c}=(c+\tilde{d}_3)$, and $\tilde{\upsilon}=(\upsilon+\tilde{d}_4)$. We now want to minimize $\infdiv*{\pmb\Gamma_1}{\pmb\Gamma_2}$ with respect to $b_1$, $b_2$, $\theta$, and $\delta\phi=\phi_1-\phi_2$ and for that we put first derivatives of $\infdiv*{\pmb\Gamma_1}{\pmb\Gamma_2}$ with respect to $b_1$, $b_2$, $\theta$,  and $\delta\phi$ equal to zero:
\begin{align*}
&\frac{\partial \infdiv*{\pmb\Gamma_1}{\pmb\Gamma_2}}{\partial b_1}=\frac{-4 b_1 +\tilde{\alpha} +\left(\tilde{\beta}\cos2\theta - 2\left( \tilde{c}\cos\delta\phi+\tilde{\upsilon}\sin\delta\phi \right)\sin2\theta\right)}{1-4 b_1^2}=0;\\
&\frac{\partial \infdiv*{\pmb\Gamma_1}{\pmb\Gamma_2}}{\partial b_2}=\frac{-4 b_2 +\tilde{\alpha} -\left(\tilde{\beta}\cos2\theta - 2\left( \tilde{c}\cos\delta\phi+\tilde{\upsilon}\sin\delta\phi \right)\sin2\theta\right)}{1-4 b_2^2}=0;\\
&\frac{\partial \infdiv*{\pmb\Gamma_1}{\pmb\Gamma_2}}{\partial \theta}=\frac{1}{2}(a_2-a_1)\left(\tilde{\beta}\sin2\theta  + 2\left( \tilde{c}\cos\delta\phi+\tilde{\upsilon}\sin\delta\phi\right)\cos2\theta\right )=0;\\
&\frac{\partial \infdiv*{\pmb\Gamma_1}{\pmb\Gamma_2}}{\partial \delta\phi}= \frac{1}{2}(a_2-a_1) \sin2\theta(-\tilde{c}\sin\delta\phi+\tilde{\upsilon}\cos\delta\phi)=0.
\end{align*} 
To solve the above equation, we consider the following cases.

{\bf Case(1).} $\tan\delta\phi=\tilde{\upsilon}/\tilde{c}$, thus, $\cos\delta\phi=\tilde{c}/\sqrt{\tilde{c}^2+\tilde{\upsilon}^2}$ and $\sin\delta\phi=\tilde{\upsilon}/\sqrt{\tilde{c}^2+\tilde{\upsilon}^2}$. In this case, $\tan2\theta=-2\sqrt{\tilde{c}^2+\tilde{\upsilon}^2}/\tilde{\beta}$. This means $\cos2\theta =\frac{\tilde{\beta}}{\sqrt{\tilde{\beta}^2+4(\tilde{c}^2+\tilde{\upsilon}^2)}}$ and $\sin2\theta=-\frac{2\sqrt{\tilde{c}^2+\tilde{\upsilon}^2} }{\sqrt{\tilde{\beta}^2+4(\tilde{c}^2+\tilde{\upsilon}^2)}}$. In this case,
\begin{align}
\label{eq:app-b1b2}
&b_1= \frac{1}{4} \left[\tilde{\alpha}+\sqrt{\tilde{\beta}^2+4(\tilde{c}^2+\tilde{\upsilon}^2)}\right],~b_2= \frac{1}{4} \left[\tilde{\alpha}-\sqrt{\tilde{\beta}^2+4(\tilde{c}^2+\tilde{\upsilon}^2)}\right].
\end{align}

{\bf Case(2).} $a_1=a_2$. This implies $b_1=b_2$, \textit{i.e.},
$\tan2\theta=\frac{\tilde{\beta}}{2\left( \tilde{c}\cos\delta\phi+\tilde{\upsilon}\sin\delta\phi \right)}$.
This implies $\cos2\theta= \frac{2\left( \tilde{c}\cos\delta\phi+\tilde{\upsilon}\sin\delta\phi \right)}{\sqrt{4\left( \tilde{c}\cos\delta\phi+\tilde{\upsilon}\sin\delta\phi \right)^2+\tilde{\beta}^2}}$ and $\sin2\theta=\frac{\tilde{\beta}}{ \sqrt{4\left( \tilde{c}\cos\delta\phi+\tilde{\upsilon}\sin\delta\phi \right)^2+\tilde{\beta}^2} }$. Therefore, $b_1= \frac{\tilde{\alpha}}{4} =b_2$.

{\bf Case (3).} $\theta=0$; then $\tan\delta\phi=-\tilde{c}/\tilde{\upsilon}$, $\cos\delta\phi=\tilde{\upsilon}/\sqrt{\tilde{c}^2+\tilde{\upsilon}^2}$, and $\sin\delta\phi=-\tilde{c}/\sqrt{\tilde{c}^2+\tilde{\upsilon}^2}$. Here $b_1=(\tilde{\alpha}+\tilde{\beta})/4$ and $b_2=(\tilde{\alpha}-\tilde{\beta})/4$.

Let us consider now double derivatives of $\infdiv*{\pmb\Gamma_1}{\pmb\Gamma_2}$ and construct the Hessian matrix in order to find the true minimum out of all the above extrema:
\begin{align*}
&\frac{\partial^2 \infdiv*{\pmb\Gamma_1}{\pmb\Gamma_2}}{\partial b_1^2}= \frac{8 b_1\left(-4 b_1 +\tilde{\alpha} +\left(\tilde{\beta}\cos2\theta - 2\left( \tilde{c}\cos\delta\phi+\tilde{\upsilon}\sin\delta\phi \right)\sin2\theta\right)\right) - 4 (1-4 b_1^2)}{\left(1-4 b_1^2\right)^2};\\
&\frac{\partial^2 \infdiv*{\pmb\Gamma_1}{\pmb\Gamma_2}}{\partial b_1 \partial b_2}= 0;~\frac{\partial^2 \infdiv*{\pmb\Gamma_1}{\pmb\Gamma_2}}{\partial b_1 \partial \theta}= -\frac{2}{1-4b_1^2}\left(\tilde{\beta}\sin2\theta  +  2\left(  \tilde{c}\cos\delta\phi+\tilde{\upsilon}\sin\delta\phi \right)\cos2\theta\right );\\
&\frac{\partial^2 \infdiv*{\pmb\Gamma_1}{\pmb\Gamma_2}}{\partial b_1 \partial \delta\phi}= -\frac{2\left( - \tilde{c}\sin\delta\phi+\tilde{\upsilon}\cos\delta\phi \right) \sin2\theta}{1-4 b_1^2};\\
&\frac{\partial^2 \infdiv*{\pmb\Gamma_1}{\pmb\Gamma_2}}{\partial b_2^2}= \frac{8 b_2\left(-4 b_2 +\tilde{\alpha} -\left(\tilde{\beta}\cos2\theta - 2\left(  \tilde{c}\cos\delta\phi+\tilde{\upsilon}\sin\delta\phi \right)\sin2\theta\right)\right) - 4 (1-4 b_2^2)}{\left(1-4 b_2^2\right)^2};\\
&\frac{\partial^2 \infdiv*{\pmb\Gamma_1}{\pmb\Gamma_2}}{\partial b_2 \partial \theta}= \frac{2}{1-4b_2^2}\left(\tilde{\beta}\sin2\theta  +  2\left(  \tilde{c}\cos\delta\phi+\tilde{\upsilon}\sin\delta\phi \right)\cos2\theta\right );~\frac{\partial^2 \infdiv*{\pmb\Gamma_1}{\pmb\Gamma_2}}{\partial b_2 \partial \delta\phi}=  \frac{2\left( - \tilde{c}\sin\delta\phi+\tilde{\upsilon}\cos\delta\phi \right) \sin2\theta}{1-4 b_2^2};\\
&\frac{\partial^2 \infdiv*{\pmb\Gamma_1}{\pmb\Gamma_2}}{\partial \theta^2}= (a_2-a_1)\left(\tilde{\beta}\cos2\theta  - 2\left( \tilde{c}\cos\delta\phi+\tilde{\upsilon}\sin\delta\phi \right)\sin2\theta\right );\\
&\frac{\partial^2 \infdiv*{\pmb\Gamma_1}{\pmb\Gamma_2}}{\partial \theta\partial\delta\phi}=(a_2-a_1)\left( - \tilde{c}\sin\delta\phi+\tilde{\upsilon}\cos\delta\phi \right)\cos2\theta.\\
&\frac{\partial^2 \infdiv*{\pmb\Gamma_1}{\pmb\Gamma_2}}{\partial\delta\phi^2}=-\frac{1}{2}(a_2-a_1)\left(  \tilde{c}\cos\delta\phi+\tilde{\upsilon}\sin\delta\phi \right) \sin2\theta.
\end{align*} 
For case 1, we have
\begin{align*}
&\frac{\partial^2 \infdiv*{\pmb\Gamma_1}{\pmb\Gamma_2}}{\partial b_1^2}=\frac{4}{\left(4 b_1^2-1\right)};~\frac{\partial^2 \infdiv*{\pmb\Gamma_1}{\pmb\Gamma_2}}{\partial b_1 \partial b_2}= 0;~\frac{\partial^2 \infdiv*{\pmb\Gamma_1}{\pmb\Gamma_2}}{\partial b_1 \partial \theta}= 0;\\
&\frac{\partial^2 \infdiv*{\pmb\Gamma_1}{\pmb\Gamma_2}}{\partial b_1 \partial \delta\phi}=0;~\frac{\partial^2 \infdiv*{\pmb\Gamma_1}{\pmb\Gamma_2}}{\partial b_2^2}= \frac{4}{\left(4 b_2^2-1\right)};~\frac{\partial^2 \infdiv*{\pmb\Gamma_1}{\pmb\Gamma_2}}{\partial b_2 \partial \theta}= 0;\\
&\frac{\partial^2 \infdiv*{\pmb\Gamma_1}{\pmb\Gamma_2}}{\partial b_2 \partial \delta\phi}=  0;~\frac{\partial^2 \infdiv*{\pmb\Gamma_1}{\pmb\Gamma_2}}{\partial \theta^2}= (a_2-a_1)\sqrt{\tilde{\beta}^2+4(\tilde{c}^2+\tilde{\upsilon}^2)}\\
&\frac{\partial^2 \infdiv*{\pmb\Gamma_1}{\pmb\Gamma_2}}{\partial \theta\partial\delta\phi}=0;~\frac{\partial^2 \infdiv*{\pmb\Gamma_1}{\pmb\Gamma_2}}{\partial\delta\phi^2}=\frac{(a_2-a_1)(\tilde{c}^2+\tilde{\upsilon}^2) }{\sqrt{\tilde{\beta}^2+4(\tilde{c}^2+\tilde{\upsilon}^2)}}.
\end{align*} 
All eigenvalues of the corresponding Hessian are positive, so this case corresponds to a minimum.
%
%
%
Let the solution correspond to  $(b_1,b_2,\theta,\delta\phi)=(b_1^*,b_2^*,\theta^*,\delta\phi^*)$; then
\begin{align*}
\infdiv*{\pmb\Gamma_1}{\pmb\Gamma_2}&=-S(\rho_1) + \frac{1}{2} \ln \left(b_1^{*2} -\frac{1}{4}\right) + \frac{1}{2}\ln \left(b_2^{*2} -\frac{1}{4}\right) +  \frac{1}{4} (a_1^*+a_2^*)\tilde{\alpha} \\
&~~~+\frac{1}{4}(a_1^*-a_2^*)\left( \tilde{\beta} \cos2\theta^*-  2  \left( \tilde{c}\cos\delta\phi^*+\tilde{\upsilon}\sin\delta\phi^*\right) \sin2\theta^*\right)\\
&=-S(\rho_1) + \frac{1}{2} \ln \left(b_1^{*2} -\frac{1}{4}\right) + \frac{1}{2}\ln \left(b_2^{*2} -\frac{1}{4}\right) +  a_1^*b_1^*+a_2^*b_2^*\\
&=-S(\rho_1) +\frac{1}{2}\sum_{i=1}^2\left[ \ln \left(b_i^* +\frac{1}{2}\right) + \ln \left(b_i^* -\frac{1}{2}\right) \right]+\sum_{i=1}^2\left[b_i^* \ln \left(b_i^* +\frac{1}{2}\right) -b_i^* \ln \left(b_i^* -\frac{1}{2}\right) \right]\\
&=-S(\rho_1) +\sum_{i=1}^2 \left(b_i^* +\frac{1}{2}\right) \ln \left(b_i^* +\frac{1}{2}\right) -  \left(b_i^* -\frac{1}{2}\right)\ln \left(b_i^* -\frac{1}{2}\right)\\
&=\sum_{i=1}^2 \left[g(b_i^*)-g(\nu_i)\right].
\end{align*} 

For case (2), we have
\begin{align*}
&\frac{\partial^2 \infdiv*{\pmb\Gamma_1}{\pmb\Gamma_2}}{\partial b_1^2}= \frac{4}{\left(4 b_1^2-1\right)};~\frac{\partial^2 \infdiv*{\pmb\Gamma_1}{\pmb\Gamma_2}}{\partial b_1 \partial b_2}= 0;~\frac{\partial^2 \infdiv*{\pmb\Gamma_1}{\pmb\Gamma_2}}{\partial b_1 \partial \theta}= \frac{2}{4b_1^2-1} \sqrt{4\left( \tilde{c}\cos\delta\phi+\tilde{\upsilon}\sin\delta\phi \right)^2+\tilde{\beta}^2};\\
&\frac{\partial^2 \infdiv*{\pmb\Gamma_1}{\pmb\Gamma_2}}{\partial b_1 \partial \delta\phi}=-\frac{2\tilde{\beta}\left( - \tilde{c}\sin\delta\phi+\tilde{\upsilon}\cos\delta\phi \right) }{(1-4 b_1^2)\sqrt{4\left( \tilde{c}\cos\delta\phi+\tilde{\upsilon}\sin\delta\phi \right)^2+\tilde{\beta}^2}};~\frac{\partial^2 \infdiv*{\pmb\Gamma_1}{\pmb\Gamma_2}}{\partial b_2^2}= \frac{4}{\left(4 b_1^2-1\right)};\\
&\frac{\partial^2 \infdiv*{\pmb\Gamma_1}{\pmb\Gamma_2}}{\partial b_2 \partial \theta}= -\frac{2}{4b_1^2-1} \sqrt{4\left( \tilde{c}\cos\delta\phi+\tilde{\upsilon}\sin\delta\phi \right)^2+\tilde{\beta}^2};~\frac{\partial^2 \infdiv*{\pmb\Gamma_1}{\pmb\Gamma_2}}{\partial b_2 \partial \delta\phi}=  \frac{2\tilde{\beta}\left( - \tilde{c}\sin\delta\phi+\tilde{\upsilon}\cos\delta\phi \right) }{(1-4 b_1^2)\sqrt{4\left( \tilde{c}\cos\delta\phi+\tilde{\upsilon}\sin\delta\phi \right)^2+\tilde{\beta}^2}};\\
&\frac{\partial^2 \infdiv*{\pmb\Gamma_1}{\pmb\Gamma_2}}{\partial \theta^2}= 0;~\frac{\partial^2 \infdiv*{\pmb\Gamma_1}{\pmb\Gamma_2}}{\partial \theta\partial\delta\phi}=\frac{2(a_2-a_1) \left( \tilde{c}\cos\delta\phi+\tilde{\upsilon}\sin\delta\phi \right) \left( - \tilde{c}\sin\delta\phi+\tilde{\upsilon}\cos\delta\phi \right)}{\sqrt{4\left( \tilde{c}\cos\delta\phi+\tilde{\upsilon}\sin\delta\phi \right)^2+\tilde{\beta}^2}};\\
&\frac{\partial^2 \infdiv*{\pmb\Gamma_1}{\pmb\Gamma_2}}{\partial\delta\phi^2}=\frac{\tilde{\beta} (a_2-a_1) \left(  \tilde{c}\cos\delta\phi+\tilde{\upsilon}\sin\delta\phi \right)}{2 \sqrt{4\left( \tilde{c}\cos\delta\phi+\tilde{\upsilon}\sin\delta\phi \right)^2+\tilde{\beta}^2} }.
\end{align*} 
\end{widetext}
Since in this case the angle $\delta\phi$ doesnot affect the minimization, we can choose it as per our convenience. Without any loss of generality, we choose $\tan\delta\phi=-\tilde{c}/\tilde{\upsilon}$, which implies $\cos\delta\phi=\tilde{\upsilon}/\sqrt{\tilde{c}^2+\tilde{\upsilon}^2}$ and $\sin\delta\phi=-\tilde{c}/\sqrt{\tilde{c}^2+\tilde{\upsilon}^2}$. The Hessian matrix in this case becomes
 \begin{align*}
 H^* =\frac{1}{4b_1^2-1}
\begin{pmatrix}
4 & 0 &  2\tilde{\beta} & -2\tilde{w}\\
0 &4 & -2\tilde{\beta} & 2\tilde{w}\\
2\tilde{\beta} &-2\tilde{\beta} & 0 & 0\\
-2\tilde{w}& 2\tilde{w} & 0 & 0
\end{pmatrix},
\end{align*}
where $\tilde{w}=\sqrt{\tilde{c}^2+\tilde{\upsilon}^2}$.
This matrix has eigenvalues $\frac{2}{4b_1^2-1}\{2,0,1-\sqrt{1+2\tilde{\beta}^2+2\tilde{w}^2}, 1+\sqrt{1+2\tilde{\beta}^2+2\tilde{w}^2}\}$. Since there exists a negative eigenvalue of the Hessian, therefore, this case doesnot correspond to a minimum.

For case 3, we have $\frac{\partial^2 \infdiv*{\pmb\Gamma_1}{\pmb\Gamma_2}}{\partial b_1^2}= \frac{4}{\left(4 b_1^2-1\right)}$, $\frac{\partial^2 \infdiv*{\pmb\Gamma_1}{\pmb\Gamma_2}}{\partial b_1 \partial b_2}= 0$, $\frac{\partial^2 \infdiv*{\pmb\Gamma_1}{\pmb\Gamma_2}}{\partial b_1 \partial \theta}= 0$, $\frac{\partial^2 \infdiv*{\pmb\Gamma_1}{\pmb\Gamma_2}}{\partial b_1 \partial \delta\phi}= 0$, $\frac{\partial^2 \infdiv*{\pmb\Gamma_1}{\pmb\Gamma_2}}{\partial b_2^2}= \frac{1}{\left(4 b_2^2-1\right)}$, $\frac{\partial^2 \infdiv*{\pmb\Gamma_1}{\pmb\Gamma_2}}{\partial b_2 \partial \theta}= 0$, $\frac{\partial^2 \infdiv*{\pmb\Gamma_1}{\pmb\Gamma_2}}{\partial b_2 \partial \delta\phi}=  0$, $\frac{\partial^2 \infdiv*{\pmb\Gamma_1}{\pmb\Gamma_2}}{\partial \theta^2}= (a_2-a_1)\tilde{\beta}$, $\frac{\partial^2 \infdiv*{\pmb\Gamma_1}{\pmb\Gamma_2}}{\partial \theta\partial\delta\phi}=(a_2-a_1)\sqrt{\tilde{c}^2+\tilde{\upsilon}^2}$, and $\frac{\partial^2 \infdiv*{\pmb\Gamma_1}{\pmb\Gamma_2}}{\partial\delta\phi^2}=0$.
 The eigenvalues of the Hessian matrix in this case are $\left\{\frac{4}{\left(4 b_1^2-1\right)}, \frac{4}{\left(4 b_2^2-1\right)}, \frac{(a_2-a_1)}{2}\left( \tilde{\beta}\pm \sqrt{\tilde{\beta}^2+4(\tilde{c}^2+\tilde{\upsilon}^2)}\right)\right\}$. So, one eigenvalue is always negative in this case. Therefore, this case also doesnot correspond to a minimum. Thus, the above analysis shows that the only minimum of relative entropy $\infdiv*{\pmb\Gamma_1}{\pmb\Gamma_2}$ is given by case (1). Therefore, for a Gaussian state $\rho(\pmb\Gamma,\mathbf{d})$, the relative entropy of local activity is given by
\begin{align*}
\act(\rho)=\sum_{i=1}^2 \left[g(b_i^*)-g(\nu_i)\right],
\end{align*}
where $\nu_i$ $(i=1,2)$ are symplectic eigenvalues of covariance matrix $\pmb\Gamma$ of $\rho$, and $b_i^*$ $(i=1,2)$  are given by Eq. \ref{eq:app-b1b2}.

\bibliography{loc-ath}
\end{document}